\newcommand{\myifthen}[2]{#1}
\newcommand{\IR}{\mathbb{R}}
\newcommand{\st}{\ensuremath{\mbox{s.t.}}}
\newcommand{\ignore}[1]{}
\newcommand{\0}{\mathbb{0}}
\newcommand{\1}{\mathbb{1}}
\newcommand{\T}{\mathcal{T}}
\newcommand{\core}{\ensuremath{\mathcal{C}}}
\newcommand{\prekernel}{\ensuremath{\mathcal{P}}}
\newcommand{\nwb}{NB}
\begin{document}
\pagestyle{headings}

\mainmatter

\title{Network Bargaining: Using Approximate Blocking Sets to Stabilize
 Unstable Instances }
\titlerunning{ Network Bargaining: Using Approximate Blocking Sets to Stabilize Unstable Instances }

\author{
Jochen K\"onemann\inst{1}
\and
Kate Larson\inst{2}
\and
David Steiner\inst{2}
}
\authorrunning{K\"onemann et al.}
\tocauthor{Jochen K\"onemann, Kate Larson, David Steiner}

\institute{Department of Combinatorics and Optimization,
University of Waterloo, Waterloo, Ontario N2L 3G1, Canada\\
\email{jochen@uwaterloo.ca}
\and
Cheriton School of Computer Science,
University of Waterloo, Waterloo, Ontario N2L 3G1, Canada\\
\email{\{klarson,dasteine\}@uwaterloo.ca}
}

\maketitle

\begin{abstract}
  We study a network extension to the Nash bargaining game, as
  introduced by Kleinberg and Tardos~\cite{KT08}, where the set of
  players corresponds to vertices in a graph $G=(V,E)$ and each edge
  $ij\in E$ represents a possible deal between players $i$ and $j$.
  We reformulate the problem as a cooperative game and study the
  following question: \emph{Given a game with an empty core (i.e. an
    unstable game) is it possible, through minimal changes in the
    underlying network, to stabilize the game?}  We show that by
  removing edges in the network that belong to a \emph{blocking set}
  we can find a stable solution in polynomial time. This motivates the
  problem of finding small blocking sets. While it has been previously
  shown that finding the smallest blocking set is
  NP-hard~\cite{BKP10}, we show that it is possible to efficiently
  find approximate blocking sets in sparse graphs.

\end{abstract}

\section{Introduction}
%%%%%%%%%%%%%

In the classical {\em Nash bargaining} game~\cite{Na50}, two players
seek a mutually acceptable agreement on how to split a dollar.  If no
such agreement can be found, each player $i$ receives her {\em
  alternative} $\alpha_i$.  Nash's solution postulates, that in an
equilibrium, each player $i$ receives her alternative $\alpha_i$ plus
half of the surplus $1-\alpha_1-\alpha_2$ (if
$\alpha_1+\alpha_2>1$ then no mutually acceptable agreement can be
reached, and both players settle for their alternatives).

In this paper, we consider a natural {\em network} extension of this
game that was recently introduced by Kleinberg and
Tardos~\cite{KT08}. Here, the set of players corresponds to the
vertices of an undirected graph $G=(V,E)$; each edge $ij \in E$
represents a potential deal between players $i$ and $j$ of unit value.
In Kleinberg and Tardos' model, players are restricted to bargain with
at most one of their neighbours. Outcomes of the {\em network
  bargaining} game (\nwb) are therefore given by a matching $M \subseteq E$,
and an {\em allocation} $x \in \IR^V_+$ such that $x_i+x_j=1$ for all
$ij \in M$, and $x_i=0$ if $i$ is $M$-exposed; i.e., if it is not
incident to an edge of $M$.

Unlike in the non-network bargaining game, the alternative $\alpha_i$
of player is not a given parameter but rather implicitly determined by
the network neighbourhood of $i$. Specifically, in an outcome $(M,x)$,
player $i$'s alternative is defined as
\begin{equation}\label{eq:alpha}
  \alpha_i = \max \{ 1-x_j \,:\, ij \in \delta(i)\setminus M\},
\end{equation}
where $\delta(i)$ is the set of edges incident to $i$. Intuitively,
a neighbour $j$ of $i$ receives $x_j$ in her current deal, and $i$ may
coerce her into a joint deal, yielding $i$ a payoff of
$1-x_j$.

An outcome $(M,x)$ of \nwb\ is called {\em
  stable} if $x_i + x_j \geq 1$ for all edges $ij \in E$, and it is
{\em balanced} if in addition, the value of the edges in $M$ is split
according to Nash's bargaining solution; i.e., for a matching edge $ij$, 
$x_i-\alpha_i=x_j-\alpha_j$. 

Kleinberg and Tardos gave an efficient algorithm to compute balanced
outcomes in a graph (if these exist). Moreover, the authors
characterize the class of graphs that admit such outcomes. In the
following main theorem of \cite{KT08}, a vertex $i \in V$ is called
{\em inessential} if there is a maximum matching in $G$ that exposes
$i$. 

\begin{theorem}[\cite{KT08}]\label{thm:kt}
  An instance of \nwb\ has a balanced outcome
  iff it has a stable one. Moreover, it has a stable outcome iff no
  two inessential vertices are connected by an edge. 
\end{theorem}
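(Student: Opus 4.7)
The theorem contains two equivalences. Since a balanced outcome is stable by definition, one direction of the first equivalence is free, and the non-trivial content lies in: (a) every stable outcome can be upgraded to a balanced one on the same matching, and (b) a stable outcome exists iff no edge joins two inessential vertices. My plan is to first establish (b) via LP duality and the Edmonds--Gallai decomposition, and then use the resulting structural information to handle (a).

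For (b), my plan is to read a stable outcome $(M,x)$ as an LP-optimal primal-dual pair between the fractional matching and fractional vertex cover LPs: $x$ is feasible for the vertex-cover LP by the stability inequalities, and $\sum_i x_i = \sum_{ij \in M}(x_i + x_j) = |M|$ matches the matching-LP value of the indicator vector of $M$. LP duality then forces $M$ to be a maximum matching and complementary slackness to hold. The key lemma I would prove is that every inessential vertex $i$ satisfies $x_i = 0$: if $i$ is matched in $M$ (else $x_i = 0$ is automatic), standard matching theory provides an $M$-alternating path from some $M$-exposed vertex $w$ to $i$ of even length, whose first edge is not in $M$ and whose last edge is in $M$. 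Starting from $x_w = 0$ and iterating $x_u + x_v \geq 1$ on non-$M$-edges with $x_u + x_v = 1$ on $M$-edges along this path forces the values to alternate $0, 1, 0, 1, \ldots, 0$, so $x_i = 0$. The forward direction of (b) is then immediate: if $ij \in E$ with $i, j$ both inessential, then $x_i + x_j = 0 < 1$ contradicts stability. For the converse, the Edmonds--Gallai decomposition $V = D \cup A \cup C$ together with the hypothesis that $G[D]$ has no edges imply that each component of $D$ is a singleton, and the explicit assignment $x_i = 0$ on $D$, $x_i = 1$ on $A$, $x_i = 1/2$ on $C$ paired with any maximum matching of $G$ gives a stable outcome (the only edges to check reduce to $A$--$D$ and $C$--$C$ pairs using $N(D) \subseteq A$).

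For (a), fix a stable $(M, x^\circ)$ and consider the polytope $S_M$ of all $x$ for which $(M, x)$ is stable. My plan is to construct a balanced $x^\star \in S_M$ from extremal values on $S_M$: for each matched vertex $i$, set $x_i^+ = \max_{x \in S_M} x_i$ and $x_i^- = \min_{x \in S_M} x_i$, so that the matching constraint forces $x_i^- + x_j^+ = 1$ on each $ij \in M$. The structural claim I would aim to prove is that the outside alternative satisfies $\alpha_i = 1 - x_j^+$, i.e., $i$'s best outside option is achieved precisely when $j$ is pushed to its upper bound on $S_M$, which reduces the Nash splitting condition on $ij$ to $x_i^\star = (1 + x_i^+ - x_j^+)/2$. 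The main obstacle, and the step I expect to be most delicate, is verifying that setting each matched coordinate to this midpoint simultaneously produces a single $x^\star$ that still lies in $S_M$; I would attack this either through a lattice/fixed-point argument in the spirit of Shapley--Shubik for bipartite assignment games, adapted to general graphs using the Edmonds--Gallai structure, or via a direct alternating-path exchange argument that propagates consistency between neighbouring matching edges.
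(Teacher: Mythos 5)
Your argument for the second equivalence --- a stable outcome exists iff no edge joins two inessential vertices --- is correct and follows the standard route: LP duality forces $M$ to be maximum, an $M$-alternating path from an exposed vertex forces $x_i = 0$ at every inessential $i$, and the Edmonds--Gallai decomposition $(D,A,C)$ with the assignment $(0,1,\tfrac12)$ handles the converse. (The paper only cites this theorem from \cite{KT08} and does not reprove it, so there is no internal proof to compare against; your treatment of this half is in the spirit of the known argument.)

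The first equivalence, however, has a genuine flaw, and it is not the step you flag as delicate. The proposed structural claim $\alpha_i(x^\star) = 1 - x_j^+$ is false, and the midpoint $x_i^\star = (x_i^+ + x_i^-)/2$ is in general not balanced. Take the path $1$--$2$--$3$--$4$ with $M = \{12,34\}$: then $S_M = \{x \geq 0 : x_1+x_2 = 1,\ x_3+x_4=1,\ x_2+x_3 \geq 1\}$, so $x_i^+ = 1$ and $x_i^- = 0$ for every $i$, and $x^\star = (\tfrac12,\tfrac12,\tfrac12,\tfrac12)$. This $x^\star$ \emph{is} in $S_M$ (so the membership question you raise is not what fails), but $\alpha_1 = 0$ while $\alpha_2 = 1 - x_3^\star = \tfrac12$, whence $x_1^\star - \alpha_1 = \tfrac12 \neq 0 = x_2^\star - \alpha_2$; the unique balanced allocation on this matching is $(\tfrac13,\tfrac23,\tfrac23,\tfrac13)$. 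The underlying problem is that $\alpha_i(x^\star)$ is determined by $x^\star$ at $i$'s \emph{non}-matching neighbours, which the local quantity $x_j^+$ for $j=M(i)$ does not control; concretely $\alpha_2 = \tfrac12$ whereas $1 - x_1^+ = 0$. A Shapley--Shubik lattice argument is also not available as a fallback, since the core of the matching game on a non-bipartite graph does not carry that lattice structure. To repair this half you need a different mechanism, for example the iterative Maschler-transfer/prekernel approach due to Faigle et al.\ \cite{FKK98} and used by Bateni et al.\ \cite{BH+10}, which is what this paper itself relies on and summarizes in Section~\ref{sec:faigle}: it converges to a point in the intersection of core and prekernel, and that intersection is exactly the set of balanced outcomes.
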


The theory of {\em cooperative games} offers another useful angle for
\nwb.  In a cooperative game (with transferable utility) we are given
a player set $N$, and a {\em valuation function} $v:2^N \rightarrow
\IR_+$; $v(S)$ can be thought of as the value that the players in $S$
can jointly create. The {\em matching game}~\cite{DIN99,SS71} is a
specific cooperative game that will be of interest for us. Here, the
set of players is the set of vertices $V$ of a given undirected
graph. The matching game has valuation function $\nu$ where $\nu(S)$
is the size of a maximum matching in the graph $G[S]$ induced by the
vertices in $S$.

One goal in a cooperative game is to allocate the value $v(N)$ of the
so called {\em grand coalition} fairly among the players. The {\em
  core} is in some sense the gold-standard among the solution concepts
that prescribe such a fair allocation: a vector $x \in \IR^N_+$ is in
the core if (a) $x(N)=v(N)$, and (b) $x(S) \geq v(S)$ for all $S
\subseteq N$, where we use $x(S)$ as a short-hand for 
$\sum_{i \in S} x_i$. In the special case of the matching game, this is
seen to be equivalent to the following:
\begin{equation}\label{eq:matcore}
\core(G) = \{ x \in \IR^V_+ \,:\, x(V)=\nu(V) \mbox{ and } x_u +
x_v \geq 1, \, \forall uv \in E\}. 
\end{equation}
Thus, the core of the matching game consists precisely of the set of
stable outcomes of the corresponding \nwb\ game. This was recently
also observed by Bateni et al.~\cite{BH+10} who remarked that the set of
balanced outcomes of an instance of \nwb\ corresponds to the elements
in the intersection of core and {\em prekernel} 
\myifthen{
  (the precise
  definition is not important at this point, and thus deferred to
  Section \ref{sec:faigle})}{
  (e.g., see \cite{CE+11,PS03} for a definition)},of the associated matching game instance.

\ignore{
$$ \prekernel(N,v) = \{ x \in \IR^N_+ \,:\, s_{ij}(x) = s_{ji}(x),
\forall i,j \in N\}, $$
where $s_{ij}(x)$ is the {\em power} of player $i$ over player
$j$. Formally, 
$$ s_{ij} = \max \{v(S)-x(S) \,:\, i \in S, j \not\in S\} $$
is the maximum {\em excess} that }

\subsection{Dealing with unstable instances}

Using the language of cooperative game theory and the work of Bateni
et al.~\cite{BH+10}, we can rephrase the main results of \cite{KT08}
as follows:  \emph{
Given an instance of \nwb, if the core of the underlying
matching game is non-empty then there is an efficient algorithm to
compute a point in the intersection of core and prekernel. } 
Such an
algorithm had previously been given by Faigle et al. in \cite{FKK98}.
It is not hard to see that the core of an instance of the matching
game is non-empty if and only if the fractional matching LP for this
instance has an integral optimum solution. We state this LP and its
dual below; we let $\delta(i)$ denote the set of edges incident to
vertex $i$ in the underlying graph, and use $y(\delta(i))$ as a
shorthand for the sum of $y_e$ over all $e \in \delta(i)$. 

\medskip
\begin{minipage}{0.35\linewidth}
  \begin{align} 
    \max \quad & \sum_{e \in E}y_e \label{lp:mat}\tag{P}\\
    \st \quad & y(\delta(i)) \leq 1 \quad \forall i \in V \notag \\
    & y \geq \0 \notag
  \end{align}
\end{minipage}
\hspace*{1ex}
\vline
\hspace*{1ex}
\begin{minipage}{0.45\linewidth}
  \begin{align}
    \min \quad & \sum_{i \in V}x_i \label{lp:matd}\tag{D}\\
    \st \quad & x_i + x_j \geq 1 \quad \forall ij \in E \label{lp:matd:1} \\
    & x \geq \0,\notag
  \end{align}
\end{minipage}
\bigskip

LP \eqref{lp:mat} does of course typically have a fractional optimal
solution, and in this
case the core of the corresponding matching game
instances is empty. Core assignments are
highly desirable for their properties, but may simply not be available 
for many instances. 
For this reason, a number of more forgiving alternative
solution concepts like {\em bargaining sets, kernel, nucleolus},
etc. have been proposed in the cooperative game theory literature
(e.g., see \cite{CE+11,PS03}).

This paper addresses network bargaining instances that are {\em
  unstable}; i.e., for which the associated matching game has an empty
core. From the above discussion, we know that there is no solution $x$
to \eqref{lp:matd} that also satisfies $\1^Tx \leq \nu(V)$.  We
therefore propose to find an allocation $x$ of $\nu(V)$ that violates
the stability condition in the smallest number of places. Formally,
we call a set $B$ of edges a {\em blocking set} if there is $x \in
\IR_+^V$ such that $\1^Tx \leq \nu(V)$, and $x_i + x_j \geq 1$ for all
$ij \in E\setminus B$.

Blocking sets were previously discussed by Bir\'o et
al.~\cite{BKP10}. The authors showed that finding a smallest such set is
NP-hard (via a reduction from {\em maximum independent set}). In this
paper, we complement this result by showing that approximate blocking
sets can be computed in {\em sparse} graphs. A graph $G=(V,E)$ is {\em
  $\omega$-sparse} for some $\omega \geq 1$ if for all $S\subseteq V$,
the number of edges in the induced graph $G[S]$ is bounded by
$\omega\,|S|$. For example, if $G$
is planar, then we may choose $\omega=3$ by Euler's formula.

%(e.g., see \cite{west}) tells us that
%we may choose $\omega=3$.

\begin{theorem}\label{thm:main}
  Given an $\omega$-sparse graph $G=(V,E)$, there is an efficient algorithm for
  computing blocking sets of size at most $8\omega+2$ times the optimum.
\end{theorem}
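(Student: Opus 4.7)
The plan is to exploit the Edmonds--Gallai structure theorem for matchings together with the sparsity of $G$. Let $(D, A, C)$ be the Edmonds--Gallai decomposition: $D$ is the set of inessential vertices, $A=N(D)\setminus D$, and $C=V\setminus(A\cup D)$. Write $D_1,\dots,D_t$ for the connected components of $G[D]$ (each factor-critical) and let $t'$ be the number of non-singleton ones. By Theorem~\ref{thm:kt} the core is non-empty iff $t'=0$; the non-singleton $D_i$'s are the obstructions to stability, so I may assume $t'\ge 1$.

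The algorithm is combinatorial. I compute $(D,A,C)$ in polynomial time, and for each non-singleton $D_i$ pick a vertex $v_{0,i}\in D_i$ of minimum degree inside $G[D_i]$, adding every $G[D_i]$-edge incident to $v_{0,i}$ to $B$. To verify that $B$ is a blocking set, I exhibit $x\in \IR_+^V$ with $\1^T x\le\nu(V)$ covering every edge of $E\setminus B$: put $x_i=1$ on $A$, $x_i=1/2$ on $C$, on each non-singleton $D_i$ set $x_v=1/2$ for $v\neq v_{0,i}$ and $x_{v_{0,i}}=0$, and $x_i=0$ on singleton components. The Tutte--Berge formula with deficiency $t-|A|$ (since $G-A$ has exactly $t$ odd components, namely the $D_i$'s, and the other components, inside $G[C]$, carry perfect matchings) gives $\nu(V)=|A|+|C|/2+(|D|-t)/2$, which matches $\1^T x$ by direct computation. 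Because the Edmonds--Gallai structure forbids $C$--$D$ edges and edges between distinct $D_i$'s, the only edges with $x$-sum $<1$ are those inside some $G[D_i]$ incident to $v_{0,i}$, i.e.\ exactly~$B$.

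For the approximation ratio, the upper bound on $|B|$ uses sparsity: $|E(G[D_i])|\le \omega|D_i|$, so the average (and hence the minimum) degree in $G[D_i]$ is at most $2\omega$; therefore $\deg_{G[D_i]}(v_{0,i})\le 2\omega$, and summing over the $t'$ non-singleton components yields $|B|\le 2\omega t'$. The lower bound on the optimum $|B^*|$ uses LP duality: for any fractional optimum $y^*$ of $G$'s matching LP, the restriction $y^*|_{E\setminus B^*}$ is feasible for $G-B^*$, so $y^*(B^*)\ge \nu^*(G)-\nu^*(G-B^*)\ge \nu^*(G)-\nu(G)$, and since $y^*\le 1$ this gives $|B^*|\ge y^*(B^*)\ge \nu^*(G)-\nu(G)$.

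The main obstacle is to connect the two bounds via a structural inequality of the form $\nu^*(G)-\nu(G)\ge t'/(4+1/\omega)$, which combined with the above gives $|B|/|B^*|\le 2\omega(4+1/\omega)=8\omega+2$. I would prove this by producing an explicit half-integral fractional matching $y$: place value $|D_i|/2$ on a factor-critical half-integral assignment inside each non-singleton $D_i$, take a perfect matching on $G[C]$, and greedily match as many $A$-vertices as possible into distinct $D_i$'s. The sparsity of $G$ enters when bounding the loss incurred at the $A$-side saturation, since the number of non-singleton $D_i$'s that a single $A$-vertex can simultaneously touch (before forcing a drop of $1/2$ in the contribution of some $D_i$) is controlled by $\omega$; this is the step where the $4+1/\omega$ constant, and hence the final factor $8\omega+2$, arises.
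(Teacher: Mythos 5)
Your approach is entirely different from the paper's. The paper formulates the blocking set problem as an LP, shows its integrality gap is unbounded in general, and then does a careful iterative-rounding argument in the bipartite sparse case (classifying extreme points as ``good'' or ``bad'' and handling the bad ones with a direct rounding based on Lemma~\ref{lem:bad}), finally reducing the general case to the bipartite one by vertex-splitting. You instead propose a purely combinatorial algorithm based on the Edmonds--Gallai decomposition.

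Unfortunately, the combinatorial approach does not work. The linchpin inequality $\nu^*(G)-\nu(G)\ge t'/(4+1/\omega)$ — which you yourself flag as ``the main obstacle'' and only sketch — is false, and in a way that cannot be repaired by adjusting constants, because your algorithm itself fails to be an $O(\omega)$-approximation. Consider a ``caterpillar of triangles'': vertices $A=\{a_1,\ldots,a_m\}$ and disjoint triangles $T_1,\ldots,T_{m+1}$, with $a_i$ joined by an edge to one designated vertex of $T_i$ and one designated vertex of $T_{i+1}$. One checks that the Edmonds--Gallai decomposition has $D=\bigcup_j T_j$ (every triangle vertex can be exposed by some maximum matching obtained by routing the $a_i$'s around the triangle in question), so $t'=m+1$, while $\nu(G)=2m+1$ and $\nu^*(G)\le 2m+2$, hence $\nu^*(G)-\nu(G)\le 1$ no matter how large $m$ is. This already kills the structural inequality. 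Worse, the true optimum blocking set has size~$1$: delete the single edge of $T_{m+1}$ not incident to $a_m$'s neighbour, set $x=1$ on every triangle vertex adjacent to $A$, $x=1/2$ on the two remaining vertices of $T_1$, and $x=0$ elsewhere; this covers all remaining edges with $\1^Tx=2m+1=\nu(G)$. Your algorithm, however, deletes the two $T_j$-internal edges at a minimum-degree vertex of each of the $m+1$ triangles, so $|B|=2(m+1)$, giving ratio $2(m+1)$, which is unbounded even though $\omega\le 5/4$ is fixed. The root cause is that you commit in advance to the specific allocation $x$ that is $1$ on the Edmonds--Gallai set $A$ and $1/2$ on $C$ and the non-singleton $D_i$'s; but the optimum blocking set is free to choose a very different allocation — in this example one that places value on the boundary vertices of the $D_i$'s that face $A$ — and thereby avoid paying per-$D_i$. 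So while your upper bound $|B|\le 2\omega t'$ and your lower bound $|B^*|\ge\nu^*(G)-\nu(G)$ are both individually correct, there is no inequality between $t'$ and $\nu^*-\nu$ that links them, and the algorithm as stated is not a constant-factor approximation.
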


The main idea in our algorithm is a natural one: formulate the
blocking set problem as a linear program, and extract a blocking set
from one of its optimal fractional solutions via an application of the
powerful technique of  {\em iterative rounding} (e.g.,
see \cite{LRS11}). We first show that the proposed LP has an unbounded
integrality gap in general graphs, and is therefore not useful for the
design of approximation algorithms for such instances. We turn to
the class of sparse graphs,
and observe that, even here, extreme
points of the LP can be highly fractional, ruling out the direct use
of standard techniques. We carefully characterize problem
extreme-points, and develop a direct rounding method for them. Our
approach exploits problem-specific structure as well as the
sparsity of the underlying graph. 

Given a blocking set $B$, let $E'=E\setminus B$ be the non-blocking
set edges, and let $G'=(V,E')$ be the induced graph. Notice that the
matching game induced by $G'$ may {\em still} have an empty core, and
that the maximum matching in $G'$ may even be smaller than that in
$G$.  We are however able to show that we can find a balanced
allocation of $\nu(V)$ as follows: let $M'$ be a maximum matching
in $G'$, and define the alternative of player $i$ as
$$ \alpha'_i = \max \{1-x_i \,:\, ij \in \delta_{G'}(i) \setminus M'\}, $$
for all $i \in V$. Call an assignment $x$ is balanced if it
satisfies the stability condition \eqref{lp:matd:1} for all edges $ij
\in M'$, and 
$$ x_i - \alpha'_i = x_j - \alpha'_j, $$
for all $ij \in M'$. A straight-forward application of an algorithm of
Faigle et al.~\cite{FKK98} yields a polynomial-time method to compute
such an allocation. 
\myifthen{We describe this algorithm in Section \ref{sec:faigle} for
  completeness.}{Details are omitted from this extended abstract.}

\section{Finding small blocking sets in sparse graphs}
%%%%%%%%%%%%%%%%%%%%%%%%%%%%%

We attack the problem of finding a small blocking set via iterative
linear programming rounding. In order to do this, it is convenient to
introduce a slight generalization of the blocking set problem. 
In an instance of the {\em generalized blocking set} problem
(GBS), we are given a graph $G=(V,E)$, a partition $E_1 \cup E_2$ of
$E$, and a parameter $\nu \geq 0$. The goal is to find a blocking set
$B \subseteq E_1$, and an allocation $x \in \IR^V_+$ such that $\1^Tx
\leq \nu$ and $x_u + x_v \geq 1$ for all $uv \in E\setminus B$, where
$\1$ is a vector of $1$s of appropriate dimension. The problem is
readily formulated as an integer program. We give its relaxation below
on the left.

\medskip
\hspace*{-6.5ex}
\begin{minipage}{0.45\linewidth}
  \begin{align} 
    \min \quad & \1^Tz  \label{lp:bs}\tag{P$_B$}\\
    \st \quad & x_u + x_v + z_{uv} \geq 1 \notag \\
            & \quad\quad\quad\quad \forall uv \in E_1 
    \label{lp:e1} \\
    & x_u + x_v \geq 1 \notag \\
    & \quad\quad\quad\quad \forall uv \in E_2 \label{lp:e2} \\
    & \1^Tx \leq \nu \label{lp:nu} \\
    & x,z \geq \0 \notag
  \end{align}
\end{minipage}
\hspace*{.5ex}
\vline
\begin{minipage}{0.55\linewidth}
  \begin{align}
    \max \quad & \1^Ta + \1^Tb - \gamma\,\nu \label{lp:bsd}\tag{D$_B$}\\
    \st \quad & a(\delta_{E_1}(u)) + \notag \\
          & \quad b(\delta_{E_2}(u)) \leq \gamma
       && \forall u \in V \label{lp:bsd:1} \\
       & a \leq \1 \notag \\
       & a,b \geq \0\notag
  \end{align}
\end{minipage}
\bigskip

The LP on the right is the dual of \eqref{lp:bs}. It has a variable
$a_e$ for all $e \in E_1$, a variable $b_e$ for all $e \in E_2$, and
variable $\gamma$ corresponds to the primal constraint limiting
$\1^Tx$.  
%%%%%%%%%%%%%%%%%%%%%%%%%%%%%%%%%%%%%%%%%%%%%%%%%%%%%%
\myifthen{
We first show that LP \eqref{lp:bs} is not useful for
finding approximate blocking sets in general graphs. To see this, we
will consider bipartite instances of the following form:

\begin{figure}[h]
\begin{center}
\includegraphics[scale=.8]{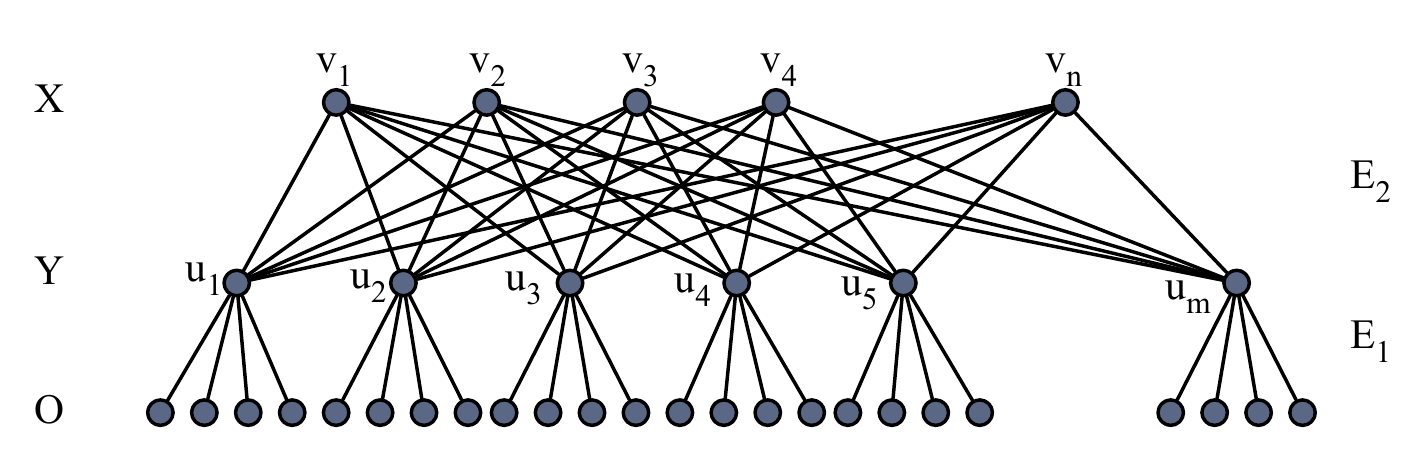}
\end{center}
\end{figure}

Our graphs will have a set $V=X \cup Y \cup O$, where $X$ has $n$, $Y$
has $m$, and $O$ has $4m$ vertices, respectively. The edges connecting
sets $X$ and $Y$ belong to $E_2$ and induce the complete bipartite
graph $K_{n,m}$. The edges between $Y$ and $O$ belong to the set
$E_1$. In our gap instance, we assume $m=2n$, and let $\nu=2n-1$.  In
order to obtain a fractional solution to \eqref{lp:bs}, we let $ x_v =
1-\alpha$ if $v \in X$, $x_v=\alpha$ if $v \in Y$, and $x_v=0$
otherwise, for any $v \in V$, where $\alpha=(n-1)/n$.  We also let
$z_{uv}=1/n$ for all $u\in Y$, and $v\in O$. The solution is clearly
feasible as $\1^Tx=2n-1$, all $E_2$ edges are covered, and $z$ is
sufficiently large to ensure that $E_1$ constraints are satisfied. The
value of this solution is $8n (1-\alpha)=8$. We now show that any
solution $(x,z)$ to \eqref{lp:bs} for which $z$ is binary must have
value at least $4n$. We do this by induction on $n$.

For $n=1$, let $v_1$ be the only $X$ vertex.  Since we have $\nu=1$,
the only feasible solution to \eqref{lp:bs} is given by $x_{v_1}=1$,
and $x_u=0$ otherwise. This forces $z_{uv}=1$ for all $u \in Y, v \in
O$, and thus the value of this solution is $8$. 
Now consider the case where $n>1$. Let $u_1 \in Y$ be a vertex with
$x_{u_1}=1$. If no such vertex exists, then we are done as
then the blocking set has size at least $2n \cdot 4 = 8n$. Let $u_2
\in Y$ be a vertex with $x_{u_2}<1$; such a vertex must also
exist as the total $x$-value on the vertices is bounded by $2n-1$. Now
consider any vertex $v_1 \in X$, and note that
$$ x_{v_1} + x_{u_1} + x_{u_2} \geq 2, $$
by the feasibility of $(x,z)$. Consider the
graph $G'$ induced by $X'=X\setminus \{v_1\}$, $Y'=Y\setminus
\{u_1,u_2\}$, and the neighbours $O' \subseteq O$ of vertices in
$Y'$, and let $\nu'=2(n-1)-1$. By induction, we know that a feasible integral
solution $(x',z')$ for this instance must have value at least
$4(n-1)$. As
$$ x(X' \cup Y' \cup O') \leq 2n - 1 - (x_{v_1} + x_{u_1} + x_{u_2})
\leq 2(n-1)-1, $$
there are at least $4(n-1)$ edges in $G'$ that are not covered by
$x$. Finally, $x$ does not cover the $4$ edges incident to $u_2$, and
hence $(x,y)$ has value at least $4n$. 
}{
We can show the LP is weak and hence not useful for approximating the
generalized blocking set problem in general graphs (for details, see \cite{KLS12}).
}
%%%%%%%%%%%%%%%%%%%%%%%%%%%%%%%%%%%%%%%%%%%%%%%%%%%%%%

\begin{lemma}
  The integrality gap of \eqref{lp:bs} is $\Omega(n)$, where $n$ is
  the number of vertices in the given instance of the blocking set problem. 
\end{lemma}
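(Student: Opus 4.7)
\medskip
\noindent\textbf{Proof plan.} My approach is to exhibit an explicit family of instances on $\Theta(n)$ vertices where \eqref{lp:bs} admits a fractional solution of constant value, but every integral solution has value $\Omega(n)$.

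First, I would construct a bipartite instance whose vertex set splits as $V = X \cup Y \cup O$ with $|X| = n$, $|Y| = 2n$, and $|O| = 8n$, and set $\nu = 2n-1$. The edges between $X$ and $Y$ form the complete bipartite graph $K_{n, 2n}$ and are placed in $E_2$, while the edges between $Y$ and $O$ (arranged so that each $Y$-vertex has $4$ private neighbours in $O$) are placed in $E_1$. With $\alpha := (n-1)/n$, define the fractional solution $x_v = 1-\alpha$ for $v \in X$, $x_v = \alpha$ for $v \in Y$, $x_v = 0$ for $v \in O$, and $z_e = 1/n$ for every $e \in E_1$. A routine check verifies $\mathbf{1}^T x = n(1-\alpha) + 2n\alpha = 2n - 1 = \nu$, that the $E_2$-constraints hold with equality, and that each $E_1$-constraint is satisfied since $\alpha + 1/n = 1$. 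The objective value is $|E_1| \cdot (1/n) = 8n/n = 8$, giving an $O(1)$ fractional optimum.

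The main work is to lower-bound the integral optimum by $4n$. I would prove this by induction on $n$. In the base case $n = 1$ we have $\nu = 1$, so the unique $X$-vertex must absorb essentially all the budget to cover its $E_2$-edges, which forces every $E_1$-edge into the blocking set, yielding value $8 = 4n$. For the inductive step with $n > 1$, given any integral feasible $(x, z)$, I would pick a $Y$-vertex $u_1$ with $x_{u_1} = 1$ (if no such vertex exists, all $8n$ edges of $E_1$ must be blocked and we are done) together with a $Y$-vertex $u_2$ with $x_{u_2} < 1$ (which must exist since $\mathbf{1}^T x \leq 2n-1$). Picking any $v_1 \in X$ and using the $E_2$-constraint between $v_1$ and $u_1, u_2$, I obtain $x_{v_1} + x_{u_1} + x_{u_2} \geq 2$. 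Deleting $v_1, u_1, u_2$ and the $O$-neighbours of $u_2$, and setting $\nu' = 2(n-1)-1$, produces a smaller instance of the same form on which induction applies, yielding $4(n-1)$ blocked edges in the subinstance. Finally, the $4$ edges incident to $u_2$ are uncovered by $x$ in the original instance and therefore contribute $4$ additional blocking edges, totaling at least $4n$.

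The obstacle I anticipate is getting the induction step clean: one must choose the vertices being removed so that (i) the residual instance has exactly the same product structure (complete bipartite $E_2$ plus private $E_1$-neighbours of degree $4$), (ii) the budget decreases by at least $2$ so that $\nu' = 2(n-1)-1$ is still an upper bound on $x(V')$, and (iii) the $4$ edges used to boost the count are disjoint from those counted inductively. Ensuring that $u_2$'s $E_1$-edges are not double-counted is the delicate point, and this is why $u_2$ (rather than $u_1$) and its private $O$-neighbours are removed from the instance. Once this bookkeeping is in place, combining the fractional value $8$ with the integral lower bound $4n$ yields an integrality gap of $\Omega(n) = \Omega(|V|)$, as required.
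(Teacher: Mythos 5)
Your proposal reproduces the paper's own construction and argument essentially verbatim: the same $X\cup Y\cup O$ gadget with $|X|=n$, $|Y|=2n$, $|O|=8n$, $\nu=2n-1$, the same fractional solution with $\alpha=(n-1)/n$ achieving value $8$, and the same induction on $n$ (removing $v_1,u_1,u_2$ and invoking the $E_2$-constraint $x_{v_1}+x_{u_1}+x_{u_2}\geq 2$ to shrink the budget) to lower-bound the integral optimum by $4n$. The only cosmetic differences are a small slip in the base case (you write ``$8=4n$'' where $8\geq 4n$ is what is needed) and whether $u_1$'s now-isolated $O$-neighbours are formally pruned, neither of which affects the argument.
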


Given this negative result, we will focus on sparse instances
$(G,\nu)$ and prove Theorem \ref{thm:main}. We first characterize
the extreme points of \eqref{lp:bs}. 

\subsection{Extreme points of \eqref{lp:bs}}
%%%%%%%%%%%%%%%%%%%%%%%

In the following, we assume that the underlying graph $G$
is bipartite; this assumption will greatly simplify our
presentation, and will turn out to be w.l.o.g.
Let $(x,z)$ be a feasible solution of LP \eqref{lp:bs}, and let
$A^=(x,z)^T=b^=$ be the set of tight constraints of the LP. It is well
known (e.g., see \cite{Sc86} and also \cite{LRS11}) that $(x,z)$ is an
extreme point of the feasible region if $A^=$ has full column-rank. In
particular, $(x,z)$ is uniquely determined by any full-rank sub-system
$A'(x,z)^T=b'$ of $A^=(x,z)^T=b^=$. If constraint \eqref{lp:nu} is not
part of this system of equations, then
$$ A' = [A'', I], $$
where $A''$ is a submatrix of the edge-vertex incidence matrix of a
bipartite graph, and $I$ is an identity matrix of appropriate
dimension. Such matrices $A'$ are well-known to be {\em totally
  unimodular} (e.g., see \cite{Sc86}), and $(x,z)$ is therefore integral in this
case.  From now on, we therefore assume that constraint \eqref{lp:nu} is tight,
and that $(x,z)$ is the unique solution to 
\begin{equation}\label{eq:extpt}
\left[ \begin{array}{cc} A'' & I \\ \1^T & \0^T \end{array}\right]
\begin{pmatrix} \bar{x} \\ \bar{z} \end{pmatrix} = 
\begin{pmatrix} \1 \\ \nu \end{pmatrix}, 
\end{equation}
where $A''$ is a submatrix of the edge, vertex incidence matrix of
bipartite graph $G$, $I$ is an identity matrix, and $\1^T$ and
$\0^T$ are row vectors of $1$'s and $0$'s, respectively. We obtain the
following useful lemma.

\begin{lemma}\label{lem:alpha}
  Let $(x,z)$ be a non-integral extreme point solution to
  \eqref{lp:bs} satisfying \eqref{eq:extpt}. Then there is an $\alpha
  \in (0,1)$ such that $x_u, z_{uv} \in \{0,\alpha,1-\alpha,1\}$ for
  all $u \in V$, and $uv \in E_1$.
\end{lemma}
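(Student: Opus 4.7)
The plan is to analyze the basis system \eqref{eq:extpt} restricted to the fractional coordinates of $(x,z)$, and to exploit the bipartite structure. Because \eqref{lp:bs} minimizes $\1^T z$ with $z\geq\0$, and the only constraints containing $z$ are the \eqref{lp:e1} inequalities, any optimum satisfies $z_{uv} = \max(0,\,1-x_u-x_v)$ for every $uv \in E_1$. Hence $z$ is a function of $x$; whenever $z_{uv} \in (0,1)$ its $E_1$ constraint is tight and contributes a row whose $z$-block entry is a distinct standard basis vector (this is the $I$ block in \eqref{eq:extpt}). The entire lemma therefore reduces to showing that the fractional $x_u$ take only two values $\alpha$ and $1-\alpha$, from which the claim on $z$ will follow by a short case analysis.

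After deleting the rows tied to fractional $z_{uv}$, the remaining basis rows involve only $x$ and have the form $x_u + x_v = 1$, coming from tight $E_2$ edges or from tight $E_1$ edges with $z_{uv} = 0$, together with the single budget row $\1^T \bar x = \nu$. Let $F$ be the set of fractional-$x$ vertices. Any tight equation with exactly one endpoint in $F$ would force the fractional endpoint into $\{0,1\}$, a contradiction; so such edges either have both endpoints in $F$ or neither, and the latter kind only affects the right-hand side after substitution. Let $H$ be the subgraph on $F$ carrying the remaining tight edges. Because $G$ is bipartite, so is $H$, and on each connected component $H_k$ the equations $x_u + x_v = 1$ admit a one-parameter solution family: fix one side to some $\alpha_k$ and the opposite side is forced to $1-\alpha_k$. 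If $H$ has $c$ connected components, the affine solution set has dimension $c$; appending the single budget row cuts this by at most $1$. Uniqueness of the extreme point in \eqref{eq:extpt} forces $c = 1$, yielding a single $\alpha \in (0,1)$ with $x_u \in \{\alpha,\,1-\alpha\}$ on $F$ and $x_u \in \{0,1\}$ elsewhere.

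To finish I would verify the $z$ claim from $z_{uv} = \max(0,\,1-x_u-x_v)$ for $uv \in E_1$: two fractional endpoints lie on opposite bipartition sides, so $z_{uv} = 0$; a fractional endpoint with an integer partner equal to $0$ gives $z_{uv} \in \{\alpha,\,1-\alpha\}$; an integer partner equal to $1$ gives $z_{uv} = 0$; and two integer endpoints give $z_{uv} \in \{0,1\}$. I expect the main obstacle to be the connectedness step for $H$: it requires a careful rank count for the bipartite edge-vertex incidence submatrix augmented by the single $\1^T$ row, and checking that edges between $F$ and $V \setminus F$ do not secretly couple or over-constrain the per-component parameters $\alpha_k$. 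Once this bookkeeping is in place, the extremality hypothesis pins down the single value $\alpha$ and the rest is a routine verification.
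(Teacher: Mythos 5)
Your proposal is correct, but it takes a noticeably different route from the paper's. The paper observes that the sub-system $[A''\,I](\bar x,\bar z)^T = \1$ (i.e. \eqref{eq:extpt} with the budget row removed) has a one-dimensional solution space, and that the polyhedron cut out by \eqref{lp:e1}, \eqref{lp:e2} and non-negativity is integral because $[A''\,I]$ is totally unimodular; the extreme point $(x,z)$ therefore lies strictly inside a segment whose endpoints are two integral vertices of that polyhedron, and the claimed $\{0,\alpha,1-\alpha,1\}$ structure is just the coordinate-wise description of a convex combination of two $0/1$ vectors. You instead (i) reduce the problem entirely to the $x$-coordinates by showing $z_{uv}=\max(0,1-x_u-x_v)$, (ii) pass to the subgraph $H$ of tight unit-sum edges on the fractional support $F$, and (iii) run a rank count: the incidence rows of $H$ have rank $|F|-c$, one budget row adds at most one to the rank, and full-rank determination of $(x,z)$ forces $c=1$, hence a single parameter $\alpha$. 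This is a more combinatorial argument that makes the two-value structure explicit rather than deducing it from TU integrality; in particular it sidesteps the (minor) question of whether the line from the paper's argument is actually bounded on both sides inside the unbounded polyhedron, since you never need that decomposition. One small caveat in your write-up: you justify $z_{uv}=\max(0,1-x_u-x_v)$ by appealing to optimality, but the lemma only assumes extremality; fortunately the same identity holds for any extreme point, because each $z_{uv}$-column of a full-rank tight system must hit at least one of the rows $z_{uv}=0$ or $x_u+x_v+z_{uv}=1$, so nothing is lost. Otherwise your plan, including the claim that tight edges with exactly one fractional endpoint cannot exist and the final case analysis recovering $z\in\{0,\alpha,1-\alpha,1\}$, is sound and would fill in to a complete proof.
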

\begin{proof}
  Standard linear algebra implies that the solution space to the  
  the system $[A'' \, I] (\bar{x},\bar{z})^T$ is a line; i.e., it has
  dimension $1$. Hence, there are two extreme points $(x^1,z^1)$ and
  $(x^2,z^2)$ of the integral polyhedron defined by constraints
  \eqref{lp:e1}, \eqref{lp:e2}, and the non-negativity constraints,
  and some $\alpha \in [0,1]$ such that 
  $$ \begin{pmatrix} x \\ z \end{pmatrix} =
       \alpha \, \begin{pmatrix} x^1 \\ z^1 \end{pmatrix} + 
       (1-\alpha) \, \begin{pmatrix} x^2 \\ z^2 \end{pmatrix}.  
  $$
  In fact, $\alpha$ must be in $(0,1)$ as $(x,z)$ is assumed to be
  fractional. This implies the lemma.\qed
\end{proof}

We call an extreme point {\em good} if there is a vertex $u$ with
$x_u=1$, or an edge $uv \in E_1$ with $z_{uv} \in \{0\} \cup
[1/3,1]$. Let us call an extreme point {\em bad} otherwise.  We will
now characterize the structure of a bad extreme point $(x,z)$. Let
$G=(V,E_1 \cup E_2)$ be the bipartite graph for a given GBS
instance. Let $\T_1\subseteq E_1$ and $\T_2\subseteq E_2$ be $E_1$ and
$E_2$ edges corresponding to tight inequalities of \eqref{lp:bs} that
are part of the defining system \eqref{eq:extpt} for $(x,z)$.  Let
$\alpha$ be as in Lemma \ref{lem:alpha}. Since $(x,z)$ is bad, it must
be that either $\alpha$ or $1-\alpha$ is larger than $2/3$; w.l.o.g.,
assume that $\alpha > 2/3$. We define the following useful sets:
\begin{eqnarray*}
  X & = & \{u \in V \,:\, x_u = 1-\alpha\} \\
  Y & = & \{u \in V \,:\, x_u = \alpha \} \\
  O & = & \{u \in V \,:\, x_u = 0\}. 
\end{eqnarray*}

\begin{lemma}\label{lem:bad}
  Let $(x,z)$ be a bad extreme point. Using the notation defined
  above, we have
  \begin{enumerate}[(a)]
    \item $z_{uv}=(1-\alpha)$ for all $uv \in E_1$, 
    \item $O \cup X$ is an independent set in $G$
    \item Each $\T_1$ edge is incident to exactly one $O$ and one $Y$ vertex,
      and the edges of $\T_2$ form a tree spanning $X \cup Y$.
      Each edge in $E$ is incident to exactly one $Y$ vertex.
 \end{enumerate}
\end{lemma}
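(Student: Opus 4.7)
My plan is to combine Lemma \ref{lem:alpha} (which restricts each $x_u$ and $z_{uv}$ to $\{0,\alpha,1-\alpha,1\}$) with the badness hypothesis and the assumption $\alpha>2/3$ to pin down the structure piece by piece.

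Part (a) is immediate: by Lemma \ref{lem:alpha} each $z_{uv}$ lies in $\{0,\alpha,1-\alpha,1\}$, and badness excludes $\{0\}\cup[1/3,1]$; since $\alpha>2/3$ places both $\alpha$ and $1$ inside $[1/3,1]$, the only remaining option is $z_{uv}=1-\alpha\in(0,1/3)$. For part (b), any two vertices $u,v\in O\cup X$ have $x$-values bounded by $1-\alpha<1/3$, so if $uv\in E_2$ the constraint $x_u+x_v\geq 1$ is violated outright, while if $uv\in E_1$ then $x_u+x_v+z_{uv}\leq 3(1-\alpha)<1$ by part (a), again a contradiction.

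For part (c) I would first do an arithmetic analysis of the tight edge constraints. A $\T_1$ edge satisfies $x_u+x_v+z_{uv}=1$, which by (a) becomes $x_u+x_v=\alpha$; enumerating pair-sums drawn from the possible $x$-values $\{0,1-\alpha,\alpha\}$ (the value $1$ is excluded by badness), the only decomposition is $\{x_u,x_v\}=\{0,\alpha\}$, so exactly one endpoint is in $O$ and the other in $Y$. An analogous analysis of $x_u+x_v=1$ on a $\T_2$ edge forces $\{x_u,x_v\}=\{1-\alpha,\alpha\}$, i.e., one endpoint in $X$ and one in $Y$. For the tree claim, I use that $(x,z)$ is the unique solution to the tight system: because $z_{uv}=1-\alpha>0$, the only tight constraint touching $z_{uv}$ is the $E_1$-row for $uv$, so every $E_1$ edge must appear in the defining system, i.e., $\T_1=E_1$, and these $|E_1|$ rows contribute independent rank $|E_1|$. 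After accounting for the $|O|$ tight constraints $x_o=0$, the remaining rank requirement is $|X|+|Y|$ on the variables $\{x_u:u\in X\cup Y\}$, which must be supplied by the $\T_2$-rows together with the $\nu$-row; hence $|\T_2|=|X|+|Y|-1$ and the $\T_2$-rows are linearly independent. Since $\T_2$ is bipartite between $X$ and $Y$, linear independence of the incidence rows is equivalent to being a forest, and a forest with $|X|+|Y|-1$ edges on $|X|+|Y|$ vertices is a spanning tree.

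Finally, the claim that every edge in $E$ has exactly one $Y$ endpoint requires slightly more care. By (b) no edge lies within $O\cup X$, so every edge has at least one $Y$ endpoint. Every $E_1$ edge is in $\T_1=E_1$ and is therefore $O$-$Y$, and every tight $E_2$ edge is $X$-$Y$. The only remaining possibility is a non-tight $E_2$ edge, whose endpoints (given the value set $\{0,1-\alpha,\alpha\}$) would both lie in $Y$. Here I invoke the bipartiteness of $G$: the spanning tree $\T_2$ is connected and has natural bipartition $(X,Y)$; since a connected bipartite graph admits a unique bipartition, $(X,Y)$ must coincide with the $G$-bipartition restricted to $X\cup Y$, so $Y$ lies entirely on one side of $G$ and is hence an independent set. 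This rules out $Y$-$Y$ edges and completes (c). I expect this last step — ruling out $Y$-$Y$ edges — to be the main obstacle, since the earlier parts reduce to arithmetic and a rank count, whereas this one genuinely requires combining the bipartite assumption with the tree structure derived earlier in (c).
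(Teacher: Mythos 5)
Your proof is correct and follows essentially the same route as the paper: part (a) is direct from Lemma~\ref{lem:alpha} plus badness, part (b) is the same case analysis on $x$-values (you are a bit more careful in treating $E_2$ edges separately, which the paper glosses over), and the last sentence of (c) uses the identical argument that the spanning tree $\T_2$ forces $(X,Y)$ to align with the bipartition of $G$. The only real deviation is in establishing that $\T_2$ is a spanning tree: the paper first proves acyclicity via the alternating-sum-around-a-cycle argument and then compares the support size $|\T_1|+|X|+|Y|$ (implicitly using $\T_1=E_1$) against the row rank $|\T_1|+|\T_2|+1$, whereas you make $\T_1=E_1$ explicit, eliminate the $z$- and $O$-variables, and conclude in one shot that the $\T_2$-rows together with the $\nu$-row must have rank $|X|+|Y|$, which simultaneously yields independence (hence forest) and the edge count; this is a slightly cleaner packaging of the same rank count, and your intuition that the final $Y$-$Y$-edge exclusion is the subtle step matches where the paper also leans on bipartiteness of $G$.
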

\begin{proof}
  We know from Lemma \ref{lem:alpha} that $z_{uv} \in \{0,1-\alpha,
  \alpha, 1\}$ for all $uv \in E_1$; (a) follows now directly from the
  fact that $(x,z)$ is bad. 

  No two vertices $u,v \in O$ can be connected by an edge, as such an
  edge $uv$ must then have $z_{uv}=1$. Similarly, no two vertices $u,v
  \in X$ can be connected by an edge as otherwise $z_{uv} \geq
  1-2(1-\alpha) > 1/3$. Finally, for an edge $uv$ between $O$ and $X$,
  we would have to have $z_{uv} \geq 1-(1-\alpha) > 2/3$, which once
  again can not be the case. This shows (b). 
  
  To see (c), consider first an edge $uv$ in $\T_1$; we must have
  $x_u+x_v=\alpha$, and this is only possible if $uv$ is incident to
  one $O$ and one $Y$ vertex. Similarly, $x_u+x_v=1$ for all $uv \in
  \T_2$, and therefore one of $u$ and $v$ must be in $X$, and one must
  be in $Y$. It remains to show that the edges in $\T_2$ induce a
  tree. Let us first show acyclicity: suppose for the sake of
  contradiction that $u_1v_1, \ldots, u_pv_p \in \T_2$ form a cycle
  (i.e., $u_1=v_p$). Then since $G$ is bipartite, this cycle contains
  an even number of edges. Let $\chi_1, \ldots, \chi_p$ be the
  $0,1$-coefficient vector of the left-hand sides of the constraints
  belonging to these edges. We see that
  $$ \sum_{i=1}^p (-1)^i \chi_i = \0, $$
  contradicting the fact that the system in \eqref{eq:extpt} has full
  (row) rank. Note that the size of the support of $(x,z)$ is
  \begin{equation}\label{eq:supsz}
    |\T_1|+|X|+|Y|
  \end{equation}
  by definition. On the other hand, the rank of the system in
  \eqref{eq:extpt} is 
  $$ |\T_1| + |\T_2| + 1 \leq |\T_1| + (|X|+|Y|-k) + 1,  $$
  where $k$ is the number of components formed by the edges in $\T_2$.
  The rank of \eqref{eq:extpt} must be at least the size of the
  support, and this is only the case when $k=1$; i.e., when $\T_2$
  forms a tree spanning $X \cup Y$. 
  Since $G$ is bipartite, $X$ must
  be fully contained in one side of the bipartition of $V$, and $Y$
  must be fully contained in the other. Since $Y$ is a vertex cover in
  $G$ by (b), every edge in $E$ must have exactly one endpoint in
  $Y$. 
  \qed\end{proof}

\myifthen{
\begin{figure}[ht]
  \begin{center}
    \includegraphics[scale=0.9]{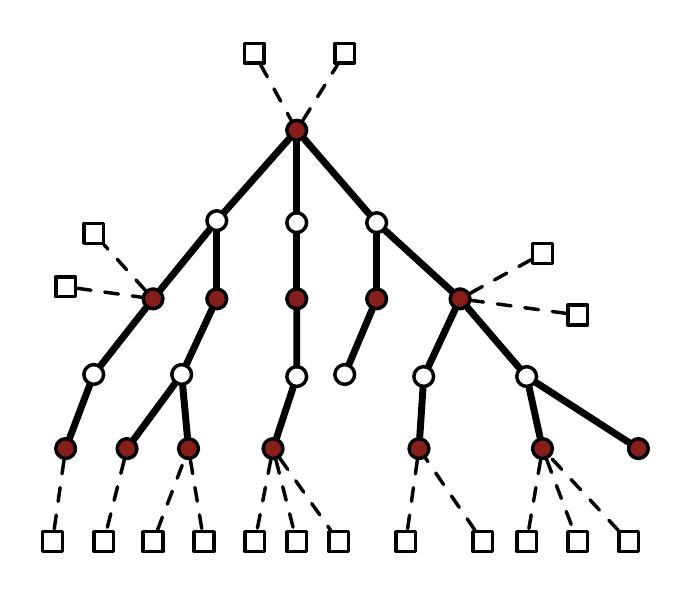}
  \end{center}
  \caption{\label{fig:badext}The figure shows the structure of a bad
    extreme point. White nodes correspond to $X$ vertices, dark ones
    to vertices in $Y$, and the squares are $O$-vertices. Edges in
    $\T_1$ are displayed as dashed lines, and those in $\T_2$ as
    solid, thick ones.}
\end{figure}}

\subsection{Blocking sets in sparse graphs via iterative rounding}

In this section we propose an iterative rounding (IR) type algorithm
to compute a blocking set in a given sparse graph $G=(V,E)$. Recall
that this means that there is a fixed parameter $\omega > 0$ such that
the graph induced by any set $S$ of vertices has at most $\omega |S|$
edges. Recall that we also initially assume that the underlying graph
$G$ is bipartite.

The algorithm we propose follows the standard IR paradigm (e.g., see
\cite{LRS11}) in many ways: given some instance of the blocking set
problem, we first solve LP \eqref{lp:bs} and obtain an extreme point
solution $(x,z)$. We now generate a {\em smaller} sub-instance of GBS
such that (a) the {\em projection} of $(x,z)$ onto the sub-instance is
feasible, and (b) any integral solution to the sub-instance can
cheaply be extended to a solution of the original GBS instance. In
particular, the reader will see the standard steps familiar from other
IR algorithms: if there is an edge $uv \in E_1$ with $z_{uv}=0$ then
we may simply drop the edge, if $z_{uv} \geq 1/3$ then we include the
edge into the blocking set, and if $x_u=1$ for some vertex, then we
may install one unit of $x$-value at $u$ permanently and delete $u$
and all incident edges.

The problem is that the feasible region of \eqref{lp:bs} has bad extreme points, even if the
underlying graph is sparse and bipartite. We will exploit the
structural properties documented in Lemma \ref{lem:bad} and show that
a small number of edges can be added to our blocking set even in this
case. Crucially, these edges will have to come from both $E_1$ and $E_2$.

In an iteration of the algorithm, we are given a sub-instance of
GBS. We first solve \eqref{lp:bs} for this instance, and obtain an
optimal basic solution $(x,z)$. Inductively we maintain the following: 
The algorithm computes a set $\hat{B} \subseteq E$ of edges,
and vector $\hat{x} \in \IR^V$ such that 
\begin{enumerate}[{[{I}1]}]
  \item $\hat{x}_u + \hat{x}_v \geq 1$ for all $uv \in E\setminus \hat{B}$,
  \item $\1^T\hat{x} \leq \nu$, and 
  \item $|\hat{B}| \leq (2\omega+1) \cdot \1^Tz$,
\end{enumerate}
where $\omega$ is the sparsity parameter introduced above. 
Let us first assume that the extreme point solution $(x,z)$ is
good. In this case we
proceed according to one of the following cases:
\begin{description}
\item[{\bf Case 1.}] ($\exists u \in V$ with $x_u=1$) In this case, all edges
incident to $u$ are covered. We obtain a subinstance of
GBS by removing $u$ and all incident edges from $G$, and by reducing
$\nu$ by $1$.
\item[{\bf Case 2.}] ($\exists uv \in E$ with $z_{uv}=0$) In this case,
obtain a new instance of GBS by removing $uv$ from $E_1$, and adding
it to $E_2$. 
\item[{\bf Case 3.}] ($\exists uv \in E_1$ with $z_{uv}\geq 1/3$) In this case
add $uv$ to the approximate blocking set $B$, and remove $uv$ from
$E_1$. 
\end{description}
In each of these three cases, we inductively solve the generated
sub-instance of GBS. If this subinstance is the empty graph, then we
can clearly return the empty set.

Let us now consider the case where $(x,z)$ is a bad extreme point.
This case will constitute a leaf of the recursion tree, and we will
show that we can directly find a small blocking set.  In the
following lemma, we define the sets $X, Y, O \subseteq V$ as in Lemma
\ref{lem:bad}. 
\myifthen{}{Its proof is deferred to \cite{KLS12}.}

\begin{lemma}\label{lem:nubd}
  Let $(x,z)$ be a bad extreme point, and let $\nu$ be the current
  bound on $\1^Tx$. Then $(|X|+|Y|)/2 < \nu < |Y|$.
\end{lemma}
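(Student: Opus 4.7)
The plan is to exploit two facts from Lemma~\ref{lem:bad} together with the tightness of constraint \eqref{lp:nu}. Since the budget constraint is tight at the extreme point $(x,z)$ and $x$ takes values $1-\alpha,\alpha,0$ on $X,Y,O$ respectively, I would first record the identity
\[
\nu \;=\; \1^T x \;=\; (1-\alpha)|X| + \alpha |Y|.
\]
This single equation will be the workhorse; the two inequalities I need are just two different ways to compare $\nu$ to a convex combination of $|X|$ and $|Y|$.

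For the upper bound $\nu < |Y|$, the key observation is that by Lemma~\ref{lem:bad}(c), $Y$ is a vertex cover of $G$. Consequently, the integral assignment $\hat x = \1_Y$, $\hat z = 0$ is a feasible solution to \eqref{lp:bs} provided $|Y| \le \nu$, and it has objective value $0$. On the other hand, our bad extreme point satisfies $\1^T z = (1-\alpha)|\T_1| > 0$ (using Lemma~\ref{lem:bad}(a) and the fact that $\T_1$ is nonempty, else there are no $z$-variables and $(x,z)$ could not be fractional). Since we work with an optimal basic solution, this would be a contradiction. Therefore $|Y| > \nu$.

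Combined with the identity for $\nu$, the upper bound forces $|Y| > |X|$, since
\[
(1-\alpha)|X| + \alpha |Y| \;<\; |Y| \iff (1-\alpha)|X| < (1-\alpha)|Y|,
\]
and $1-\alpha > 0$. For the lower bound, I would then compute
\[
\nu - \tfrac{|X|+|Y|}{2} \;=\; \bigl(\tfrac12 - \alpha\bigr)|X| + \bigl(\alpha - \tfrac12\bigr)|Y| \;=\; \bigl(\alpha - \tfrac12\bigr)\bigl(|Y| - |X|\bigr),
\]
which is strictly positive because $\alpha > 2/3 > 1/2$ and $|Y| > |X|$.

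The only real obstacle is recognising that Lemma~\ref{lem:bad}(c) immediately supplies a cheap integer-feasible solution; once that is noticed, the upper bound is automatic, and the lower bound is a one-line algebraic rearrangement powered by $\alpha > 2/3$.
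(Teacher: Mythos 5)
Your proof follows essentially the same route as the paper's: both hinge on (i) $Y$ being a vertex cover, which makes $\hat x = \1_Y$, $\hat z = \0$ feasible with objective $0$ whenever $\nu \ge |Y|$, contradicting the strictly positive objective at a bad extreme point, and (ii) the identity $\nu = (1-\alpha)|X| + \alpha|Y|$ together with $\alpha > 2/3$ for the lower bound. Your lower-bound algebra is a touch more direct (you factor out $(\alpha - \tfrac12)(|Y|-|X|)$, while the paper passes through the intermediate $\tfrac{2|Y|+|X|}{3}$), but these are equivalent one-liners.

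One small caution about the parenthetical in your upper-bound argument. Your identity $\1^T z = (1-\alpha)|\T_1|$ is fine because at a bad extreme point every $z_{uv}$, $uv\in E_1$, is strictly positive, so every $E_1$-constraint must be tight in the defining basis, giving $\T_1 = E_1$. However, the reason you give for $\T_1 \neq \emptyset$ --- that otherwise ``$(x,z)$ could not be fractional'' --- does not actually hold: if $E_1 = \emptyset$, the $x$-part of an extreme point can still be fractional once \eqref{lp:nu} is tight (e.g.\ a path with the budget in the right range). The case $E_1 = \emptyset$ is really excluded for a different reason, implicit in the paper's setup: the deduction that $\alpha > 2/3$ (or $1-\alpha > 2/3$), which underlies the whole structure in Lemma~\ref{lem:bad}, requires at least one $E_1$ edge with $z_{uv} \in (0,1/3)$. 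So a ``bad'' extreme point in the intended sense forces $E_1 \neq \emptyset$ automatically; no separate fractionality argument is needed or valid. With that repair your proof is correct and matches the paper's.
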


\myifthen{
\begin{proof}
Lemma \ref{lem:bad} (b) shows that $Y$ is a vertex cover in the
  current graph $G$. Hence, if $\nu \geq |Y|$ then we could simply let
  $\hat{x}_u=1$ for all $u \in Y$ and choose $B$ to be the empty
  blocking set. This also implies that $|Y| > |X|$ as otherwise
  $$ \1^Tx = (1-\alpha)|X|+\alpha |Y| \geq |Y| > \nu, $$
  contradicting feasibility. 
  To see the lower-bound, recall that \eqref{lp:nu} is tight by
  assumption, and thus
  $$ \nu = (1-\alpha)|X| + \alpha |Y| > \frac{2|Y| + |X|}{3} >
  \frac{|X|+|Y|}{2}, $$
  where the first inequality uses the fact that $\alpha > 2/3$ from
  Lemma \ref{lem:bad}, and the last inequality follows as $|Y| >
  |X|$.
\qed\end{proof}}{}

We can use this bound on $\nu$ to prove that we can find small
blocking sets given a bad extreme point for \eqref{lp:bs}.

\begin{lemma}\label{lem:badextbd}
  Given a bad extreme point $(x,z)$ to \eqref{lp:bs}, we can find a
  blocking set $\hat{B}\subseteq E$, and corresponding $\hat{x}$ such that
  $\1^T\hat{x} \leq \nu$, and $|\hat{B}| \leq (2\omega+1) \cdot \1^Tz$. 
\end{lemma}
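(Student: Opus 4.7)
The plan is to build $\hat x$ and $\hat B$ directly from the structure given by Lemma~\ref{lem:bad}. Since $Y$ is a vertex cover of $G$ (every edge has a unique $Y$-endpoint by Lemma~\ref{lem:bad}(c)) and $\nu<|Y|$ by Lemma~\ref{lem:nubd}, the most natural attempt is to set $\hat x_u=1$ on some $Y'\subseteq Y$ of size $\lfloor\nu\rfloor$ and $\hat x_u=0$ elsewhere, and to take $\hat B$ to be the edges incident to $Y'':=Y\setminus Y'$. Properties I1 and I2 are then immediate. To bound $|\hat B|$, I would select $Y''$ to consist of the $|Y''|$ vertices of smallest total degree in $G$; averaging then gives $|\hat B|\leq (|Y''|/|Y|)\cdot|E|$. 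Two useful identities follow from the earlier lemmas: by Lemma~\ref{lem:bad}(a), $\1^T z=(1-\alpha)\,|E_1|$, and from the $\nu$-computation in the proof of Lemma~\ref{lem:nubd}, $|Y''|\leq (1-\alpha)(|Y|-|X|)+1$.

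This first bound is tight when $|X|$ is close to $|Y|$ but too weak when $|X|$ is small. In the latter regime I would use a second strategy: set $\hat x_u=1$ on all of $X$ (so that every $E_2$-edge is covered, since by Lemma~\ref{lem:bad}(c) each such edge has an $X$-endpoint) and then on $|Y'|=\lfloor\nu\rfloor-|X|$ further vertices of $Y$; the remaining budget is nonnegative by Lemma~\ref{lem:nubd} (which gives $\nu>(|X|+|Y|)/2\geq|X|$). Now $\hat B$ consists only of $E_1$-edges incident to $Y\setminus Y'$, and picking $Y\setminus Y'$ of smallest $E_1$-degree yields $|\hat B|\leq (|Y\setminus Y'|/|Y|)\,|E_1|$, where $|Y\setminus Y'|=(1-\alpha)|Y|+\alpha|X|$.

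The main obstacle is to show, via a case analysis on the ratio $t:=|X|/|Y|$ (or equivalently on $\alpha$, which satisfies $\alpha>2/3$), that at least one of the two strategies produces $|\hat B|\leq(2\omega+1)\cdot\1^T z$. Here sparsity enters. Since each non-isolated $O$-vertex is incident to at least one $E_1$-edge, the "useful" part of $V$ has size at most $|X|+|Y|+|E_1|$, and so $|E|\leq\omega(|X|+|Y|+|E_1|)$; combined with $|E_2|\geq|X|+|Y|-1$ from the spanning tree $\T_2$ in Lemma~\ref{lem:bad}(c), this interrelates $|X|,|Y|,|E_1|$ in the right way. I expect the hard part to be identifying the exact cut-off on $t$ (or on $\alpha$) where one strategy overtakes the other, and pushing the algebra so that the constant comes out to be precisely $2\omega+1$ rather than something larger; the $+1$ coming from rounding $\lfloor\nu\rfloor$ will be absorbed into this constant.
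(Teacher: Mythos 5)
Your approach is genuinely different from the paper's: you avoid LP duality entirely and instead try to win by a case analysis between two explicit rounding strategies, bounding each via degree--averaging on $Y$. The paper, in contrast, uses a single rounding strategy (put $\hat{x}=1$ on $\nu$ vertices of $Y$ of smallest $E_2$-degree, chosen greedily) and then bounds the two halves of the blocking set asymmetrically: the $E_2$-contribution is controlled by sparsity and Lemma~\ref{lem:nubd}, while the $E_1$-contribution is bounded \emph{uniformly per vertex} by the dual variable $\gamma$, which by complementary slackness (all $z_{uv}=1-\alpha>0$ forces $a_{uv}=1$, so $|\delta_{E_1}(u)|\le\gamma$ for every $u$). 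Finally, it converts $\gamma(|Y|-\nu)$ into $\1^Tz$ via strong duality and the fact that the dual constraint is tight at every $u\in Y$.

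The gap in your write-up is precisely the ingredient that duality supplies. Your two strategies give
\begin{equation*}
  B_1 \le (1-\alpha)\,\frac{|Y|-|X|}{|Y|}\,(|E_1|+|E_2|), \qquad
  B_2 \le \bigl((1-\alpha) + \alpha\,t\bigr)|E_1|, \qquad t:=|X|/|Y|,
\end{equation*}
against the target $T = (2\omega+1)(1-\alpha)|E_1| = (2\omega+1)\,\1^Tz$. A short computation shows $B_2\le T$ iff $t \le 2\omega(1-\alpha)/\alpha$, and when that fails, $B_1\le T$ reduces (using $|E_2|\le\omega(|X|+|Y|)$) to a lower bound on $|E_1|$ of the form $|E_1|\gtrsim |Y|-|X|$. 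That bound is \emph{not} implied by Lemma~\ref{lem:bad} or by the sparsity/spanning-tree facts you invoke; a bad extreme point could a priori have arbitrarily few $E_1$-edges relative to $|Y|-|X|$, in which case both $B_1$ and $B_2$ overshoot $T$. The lower bound $|E_1|\ge|Y|-|X|$ does hold, but only because $(x,z)$ is an \emph{optimal} basic solution: complementary slackness forces $a_{uv}=1$ for every $uv\in E_1$, hence $\gamma\ge1$ whenever $E_1\ne\emptyset$, and strong duality gives $\1^Tz=\gamma(|Y|-\nu)$, i.e.\ $(1-\alpha)|E_1|=\gamma(1-\alpha)(|Y|-|X|)$, so $|E_1|=\gamma(|Y|-|X|)\ge|Y|-|X|$. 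Once you import this one duality fact, your two-strategy case split does close (and even gives $(2\omega+1)$ without rounding slack, since $\nu$ is an integer throughout the recursion); without it, the argument is incomplete in exactly the regime you flag as ``the hard part.''

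One smaller point: in Strategy~2, to justify that only $E_1$-edges need to be blocked, you should note that in a bad extreme point \emph{every} $E_1$-edge, not just those in $\T_1$, runs between $O$ and $Y$ (this holds because each $z_{uv}>0$ must correspond to a tight row of the defining system, so $\T_1=E_1$), while all $E_2$-edges run between $X$ and $Y$ and are covered by setting $\hat{x}\equiv1$ on $X$.
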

\begin{proof}
     We will construct a blocking set $\hat{B}$ as follows: let $\hat{x}_u=1$ for a
  carefully chosen set $\hat{Y}$ of $\nu$ vertices from the set $Y$, and let
  $\hat{x}_u=0$ for all other vertices in $V$. Recall once more from
  Lemma \ref{lem:bad} (b) that $Y$ is a vertex cover in $G$, and hence
  it suffices to choose
  \begin{equation}\label{eq:Bdef}
    \hat{B} = \bigcup_{u \in Y\setminus\hat{Y}} \delta(u) = \bigcup_{u \in
      Y\setminus \hat{Y}} \bigl( \delta_{E_1}(u) + \delta_{E_2}(u) \bigr)
  \end{equation}
  as our blocking set, where $\delta_{E_i}(u)$ denotes the set of
  $E_i$ edges incident to vertex $u$. Let $(a,b,\gamma)$ be the
  optimal dual solution of \eqref{lp:bsd} corresponding to extreme
  point $(x,z)$. Then note that complementary slackness together with
  the fact that $z_{uv} > 0$ for all $uv \in E_1$ implies that
  $a_{uv}=1$ for these edges as well. Thus $\gamma$ is an upper bound
  on the number $E_1$-edges incident to a vertex $u$ by dual
  feasibility. With \eqref{eq:Bdef} we therefore obtain
  \begin{equation} \label{eq:Bdef2}
    |\hat{B}| \leq \sum_{u \in Y\setminus \hat{Y}} \left(\gamma +
      |\delta_{E_2}(u)| \right) \leq (|Y|-\nu)\gamma + 
    \sum_{u \in Y\setminus\hat{Y}} |\delta_{E_2}(u)|. 
  \end{equation}
  Lemma \ref{lem:bad} (c) shows that each $E_2$ edge is incident to
  one $X$, and one $Y$ vertex. As the subgraph induced by $X$ and $Y$ is
  sparse, there therefore must be a
  vertex $u_1 \in Y$ of degree at most $\omega
  (|X|+|Y|)/|Y|$. Removing this vertex from $G$ leaves a sparse graph,
  and we can therefore find a vertex $u_2$ of degree at most $\omega
  (|X|+|Y|-1)/(|Y|-1)$. Repeating this $|Y|-\nu$ times we pick a set
  $u_1, \ldots, u_{|Y|-\nu}$ of vertices such that
  \myifthen{\begin{equation}}{\begin{multline}}\label{eq:Bdef3}
    \sum_{i=1}^{|Y|-\nu} |\delta_{E_2}(u_i)| \leq
  \sum_{i=1}^{|Y|-\nu} \frac{\omega (|X|+|Y|-i)}{|Y|-i} \leq \\
  (|Y|-\nu)
  \cdot \frac{\omega (|X|+|Y|)}{\nu} \leq 2\omega (|Y|-\nu), 
  \myifthen{\end{equation}}{\end{multline}}
  where the last inequality follows from Lemma \ref{lem:nubd}.
  We now let $\hat{Y} = Y\setminus \{u_1, \ldots, u_{|Y|-\nu}\}$, and
  hence let $\hat{x}_{u}=1$ for $u \in \hat{Y}$, and $\hat{x}_u=0$ for
  all other vertices $u \in V$; 
  \eqref{eq:Bdef2} and \eqref{eq:Bdef3} together imply that 
  $$ |\hat{B}| \leq (|Y|-\nu) (\gamma + 2\omega) \leq (2\omega+1)
  \gamma (Y-\nu), $$ where the last inequality follows from the fact
  that $\gamma \geq 1$. Lemma \ref{lem:bad}(c) shows
  that each edge $e \in E$ has exactly one endpoint in $Y$. Applying
  complementary slackness together with the fact that $x_u>0$ for all
  $u \in Y$, we can therefore rewrite the objective function of
  \eqref{lp:bsd} as
  $$ \1^Ta + \1^Tb - \gamma\,\nu = \gamma (|Y|-\nu). $$
 The lemma follows.
\qed\end{proof}

We can now put things together.

\begin{lemma}\label{lem:final}
  Given an instance of GBS, the above procedure terminates with a set
  $\hat{B} \subseteq E$, and $\hat{x} \in \IR^V$ such that $\1^T\hat{x} \leq
  \nu$, and $\hat{x}_u+\hat{x}_v \geq 1$ for all $uv \in E\setminus \hat{B}$.
  The set $\hat{B}$ has size at most $(2\omega+1)\1^Tz$, where $(x,z)$ is an
  optimal solution to \eqref{lp:bs} for the given GBS instance.
\end{lemma}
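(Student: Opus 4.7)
The plan is to prove the lemma by induction on the size of the current GBS subinstance (measured, say, by $|V|+|E_1|$), tracking that the invariants [I1]--[I3] are maintained after each iteration. There are two base cases. The first is the trivial empty subinstance, where we simply return $\hat{B}=\emptyset$ and $\hat{x}=\0$. The second is when the LP returns a bad extreme point: no recursive call is made, and Lemma \ref{lem:badextbd} directly produces $\hat{B}$ and $\hat{x}$ satisfying $\1^T\hat{x}\leq\nu$ and $|\hat{B}|\leq(2\omega+1)\1^Tz$, which is exactly what is required.

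For the inductive step, the extreme point $(x,z)$ is good, so one of Cases 1--3 applies. In each case I would first verify that projecting $(x,z)$ onto the subinstance gives a feasible LP point, so the new LP optimum $\1^Tz'$ is at most the projected value, and then combine this with the inductive bound on $|\hat{B}'|$. In Case 1 ($x_u=1$), removing $u$ together with its incident edges leaves $x|_{V\setminus\{u\}}$ feasible for $\nu'=\nu-1$, so $\1^Tz'\leq\1^Tz$; extending by $\hat{x}_u=1$ covers every edge incident to $u$ and preserves [I1] and [I2]. In Case 2 ($z_{uv}=0$), reclassifying $uv$ into $E_2$ is feasible because the tight LP constraint gives $x_u+x_v\geq 1-z_{uv}=1$, and since dropping the zero coordinate leaves $\1^Tz'=\1^Tz$, induction immediately yields the bound.

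The main point to verify carefully is Case 3, where the cost of adding $uv$ to $\hat{B}$ must be absorbed by the drop in LP value. The restriction $(x, z|_{E_1\setminus\{uv\}})$ is feasible for the subinstance LP, so
\[
\1^Tz' \;\leq\; \1^Tz - z_{uv} \;\leq\; \1^Tz - \tfrac{1}{3},
\]
using the Case 3 hypothesis $z_{uv}\geq 1/3$. By induction $|\hat{B}'|\leq(2\omega+1)\1^Tz'$, and therefore
\[
|\hat{B}| \;=\; |\hat{B}'|+1 \;\leq\; (2\omega+1)\1^Tz - \tfrac{2\omega+1}{3} + 1.
\]
Since $\omega\geq 1$, we have $(2\omega+1)/3\geq 1$, so the right-hand side is at most $(2\omega+1)\1^Tz$ and [I3] is preserved. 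Termination follows because each iteration strictly decreases $|V|+|E_1|$ (Cases 1 and 3) or strictly decreases $|E_1|$ alone (Case 2), so the recursion bottoms out either at an empty instance or at a bad extreme point that triggers Lemma \ref{lem:badextbd}.
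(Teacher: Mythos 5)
Your proof is correct and follows essentially the same inductive strategy as the paper: base cases for the empty instance and for a bad extreme point (via Lemma~\ref{lem:badextbd}), and an inductive step for good extreme points split into the same three cases, with the same computation in Case~3 showing that $(2\omega+1)z_{uv} \geq (2\omega+1)/3 \geq 1$ absorbs the $+1$ for the newly added edge. The only stylistic difference is that you spell out Case~2 and the termination argument explicitly via the measure $|V|+|E_1|$, whereas the paper waves Case~2 away as "virtually identical" to Case~1 and inducts on recursion depth.
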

\begin{proof}
  The proof uses the usual induction on the recursion depth. Let us
  first consider the case where the current instance is a leaf of the
  recursion tree. The lemma follows vacuously if the graph in the given GBS
  instance is empty. Otherwise it follows immediately from Lemma
  \ref{lem:badextbd}. 

  Any internal node of recursion tree corresponds to an instance of
  GBS where $(x,z)$ is a good extreme point.  We claim that, no matter
  which one of the above cases we are in, we have that (a) a suitable
  projection of $(x,z)$ yields a feasible solution for the created GBS
  sub-instance, and (b) we can {\em augment} an approximate blocking
  set for this sub-instance to obtain a {\em good} blocking set for
  the instance given in this iteration. We proceed by looking at the
  three cases discussed above.

  \noindent {\bf Case 1.} Let $(x',z')$ be the natural projection of
  $(x,z)$ onto the GBS sub-instance; i.e., $x'_v$ is set to $x_v$ for
  all vertices in $V-u$, and $z'_{vw}=z_{vw}$ for the remaining edges
  $vw \in E_1\setminus \delta(u)$. This solution is easily verified to
  be feasible. Inductively, we therefore know that we obtain a
  blocking set $\bar{B}$ and corresponding vector $\bar{x}$ such that
  $\bar{B}$ has no more than $(2\omega+1)\, \1^T\bar{z} \leq
  (2\omega+1)\, \1^Tz$ elements, and $\1^T\bar{x} \leq \nu-1$. Thus,
  letting $\hat{x}_v=\bar{x}_v$ for all $v \in V-u$, and $\hat{x}_u=1$
  together with $\hat{B}=\bar{B}$ gives a feasible solution for the
  original GBS instance.
  
  \noindent{\bf Case 2.} The argument for this case is virtually
  identical to that of Case 1, and we omit the details. 

  \noindent{\bf Case 3.} Once again we project the current solution
  $(x,z)$ onto the GBS subinstance; i.e., let $x'=x$, and
  $z'_{qr}=z_{qr}$ for all $qr \in E_1 - uv$. Clearly
  $(x',z')$ is feasible for the GBS subinstance, and
  inductively we therefore obtain a vector $\bar{x}$ and corresponding
  feasible blocking set $\bar{B}$ of
  size at most $(2\omega+1)\cdot \1^Tz'$. Adding $uv$ to $\bar{B}$ yields a
  feasible blocking set $\hat{B}$ for the original instance together with
  $\hat{x}=\bar{x}$. Its size is at most $(2\omega+1)\,\1^Tz' + 1 \leq (2\omega+1)\,
  1^Tz$ as $\omega \geq 1$.
\qed\end{proof}

Suppose now that we are given a non-bipartite, sparse instance of the
blocking set problem:
$G=(V,E)$ is a general sparse graph, and $\nu > 0$ is a parameter. We
create a bipartite graph $H$ in the usual way: for each vertex $u \in V$
create two copies $u_1$ and $u_2$ and add them to $H$. For each edge
$uv \in E$, add two edges $u_1v_2$ and $u_2v_1$ to $H$. The new
blocking set instance is given by $(H,\nu')$ where $\nu'=2\nu$. 

Given a feasible solution $(x,z)$ to \eqref{lp:bs} for the instance
$(G,\nu)$ , we let $x'_{u_i}=x_u$ for all
$u \in V$ and $i \in \{1,2\}$, and $z'_{u_iv_j}=z_{uv}$ for all edges
$u_iv_j$. For any edge $u_iv_j$ in $H$, we now have
$$ x'_{u_i} + x'_{v_j} + z_{u_iv_j} = x_u + x_v + z_{uv} \geq 1, $$
and $\1^Tx' \leq 2\1^Tx \leq 2\nu$. Thus, 
$(x',z')$ is feasible to \eqref{lp:bs} for instance $(H,\nu')$, and
its value is at most twice that of $\1^Tz$. 
Let $\hat{x}, \hat{B}$ be a feasible solution to
the instance on graph $H$. Then let 
$$ B = \{ uv \in E \,:\, u_1v_2 \mbox{ or } u_2v_1 \mbox{ are in }
\hat{B}\}, $$
and note that $B$ has size at most that of $\hat{B}$. Also let
$x_u=(\hat{x}_{u_1}+\hat{x}_{u_2})/2$ for all $u \in V$. Clearly,
$\1^Tx \leq \nu$, and for any edge $uv \in E$, we have
$$ x_u + x_v \geq \frac{\hat{x}_{u_1} + \hat{x}_{u_2} + \hat{x}_{v_1}
  + \hat{x}_{v_2}}{2}, $$
and the right-hand side is at least $1$ if none of the two edges $u_1v_2$,
$u_2v_1$ is in $\hat{B}$. This shows feasibility of the pair $x,B$. 
In order to prove Theorem \ref{thm:main} it now remains to show that
graph $H$ is sparse. Pick any set $S$ of vertices in $H$, and let 
$$ S'=\{v \in V \,:\, \mbox{at least one of } v_1 \mbox{ and } v_2
\mbox{ are in } S\}. $$ Then $|S'| \leq |S|$, and the number of edges
of $H[S]$ is at most twice the number of edges in $G[S']$, and hence
bounded by $2\omega\, |S|$; we let $\omega'=2\omega$ be the sparsity
parameter of $H$. 
Let $(x,z)$ and $(x',z')$ be optimal basic solutions to \eqref{lp:bs}
for instances $(G,\nu)$, and $(H,\nu')$, respectively. The blocking
set $B$ for $G$ has size no more than
$$ (2\omega' + 1) \1^Tz' \leq 2(4\omega + 1) \1^Tz. $$

Thus, we have proven Theorem~\ref{thm:main}.

%\begin{theorem}\label{thm:main}
%  Given an $\omega$-sparse graph $G=(V,E)$, there is an efficient algorithm for
%  computing blocking sets of size at most $8\omega+2$ times the optimum.
%\end{theorem}

\myifthen{
\section{From blocking set to balanced allocation}\label{sec:faigle}
%%%%%%%%%%%%%%%%%%%%%%%%%%%%%%%%%%%%

Let $G=(V,E)$ be an instance of the matching game, $B \subseteq E$ a
blocking set, and $x \in \IR^V_+$ s.t.
\begin{align}
  \1^Tx & \leq \nu(V) \tag{S1} \label{stab:1} \\
  x_i + x_j&\geq 1 \quad \forall ij \in E\setminus B.
  \tag{S2} \label{stab:2}
\end{align}
In this section we will show that, for any maximum matching $M'$ in
$G'=G[E\setminus B]$, we can efficiently find a vector $\bar{x}$ that
satisfies \eqref{stab:1}, \eqref{stab:2}, as well as the balancedness
condition
\begin{equation}\tag{S3} \label{stab:3}
  \bar{x}_i - \alpha_i = \bar{x}_j - \alpha_j \quad \forall ij \in M',
\end{equation}
thus resulting in a stable outcome.

Following the work of Bateni et al. \cite{BH+10} this task reduces to
finding a point in the intersection of core and prekernel of the
matching game. We describe an elegant algorithm due to Faigle et
al.~\cite{FKK98} (in the matching game special case); building on a
local-search algorithm due to Maschler, and Stearns~\cite{St68}, the
authors presented an algorithm that efficiently computes prekernel
elements of general TU games (under mild conditions).

\subsection{From prekernel to balancedness}

We define the prekernel of the matching game first. For a pair of
vertices $i,j \in V$, define the {\em surplus} of $i$ over $j$ as
$$ s_{ij}(x) = \max \{ 1 -x_k -x_i \,:\, ik \in E', \, k\neq j\}, $$
where $E'=E\setminus B$ is the set of non-blocking set edges in
$G$\footnote{We abuse notation slightly in this definition of
  $s_{ij}$, and let $s_{ij}(x)=-x_i$ if $i$ has no neighbour other
  than $j$.}.  We will omit the argument $x$ if it is clear from the
context.  For convenience, we will let $e_{ij}$ and $e_{ji}$ be fixed
edges incident to vertices $i$ and $j$ that define the respective
surpluses; i.e. $s_{ij}(x)=1-x(e_{ij})$ and $s_{ji}(x)=1-x(e_{ji})$,
where $x(uv)$ is the {\em value} of edge $uv$ and is a short-hand for
the sum $x_u+x_v$ of the values of the incident vertices.  The {\em prekernel} of the
matching game is the set of all non-negative vectors $x$ with $\1^Tx=\nu(V)$,
and $s_{ij}(x)=s_{ji}(x)$ for all $i,j \in V$.

  Consider some matching $M' \in G[E']$ and let $\alpha_u$ be the
  alternative of $u$ in graph $G'$ with respect to this matching (see
  \eqref{eq:alpha}). Then one sees that $\bar{x}_i - \alpha_i =
  \bar{x}_j-\alpha_j$ if and only if $s_{ij}(\bar{x}) =
  s_{ji}(\bar{x})$. Hence, we will now describe an algorithm that
  finds a vector $\bar{x}$ satisfying \eqref{stab:1}, \eqref{stab:2},
  and the following reformulation of \eqref{stab:3}:
\begin{equation}\label{stab:4}\tag{S3'}
  s_{ij} = s_{ji} \quad \forall ij \in E'. 
\end{equation}
From now on, we will assume that $s_{ij} \geq s_{ji}$ for all $ij \in
E'$. Suppose that $x$ satisfies \eqref{stab:1} and \eqref{stab:2}, but
not \eqref{stab:4}, and let
$$ s_{i_1j_1} \geq s_{i_2j_2} \geq \ldots \geq s_{i_mj_m} $$
be a list of the surpluses of edges in $E'$. Let $1 \leq p \leq m$ be
smallest such that $s_{i_pj_p} > s_{j_pi_p}$, and define
$s(x)=s_{i_pj_p}$ as the largest surplus of any violated pair. 
Let $S(x)$ and $I(x)$ be the sets of all pairs whose surplus is larger
than or equal to $s(x)$, respectively; i.e., 
\begin{eqnarray*}
  S(x) & = & \{ ij \,:\, s_{ij}(x) > s(x) \} \\
  I(x) & = & \{ ij \,:\, s_{ij}(x) = s(x) \}.
\end{eqnarray*}
Let $\mu > 0$ such that
$s_{i_pj_p} - 2\mu = s_{j_pi_p}$ for some $1 \leq p \leq m$. 
We then consider the following natural local shift proposed by Maschler:
$$ x'_u = \begin{cases} x_u + \mu & u = i_p \\ x_u - \mu & u = j_p \\
  x_u & \mbox{otherwise,} \end{cases} $$ for all $u \in V$. 
We collect a few useful observations in the following lemma.

\begin{lemma}[\cite{FKK98}] \label{lem:maschler}
  \begin{enumerate}[(i)]
  \item $s_{i_pj_p}(x')=s_{j_pi_p}(x')=s(x)-\mu$
  \item $s(x') \leq s(x)$, and if 
    $s(x')=s(x)$ then $|I(x')| < |I(x)|$
  \item For $ij \in S(x)$, none of $e_{ij}$ and $e_{ji}$ change value, 
    and if some edge $e \neq ij$ incident to $i$ or $j$ decreases in
    value, then its new value is larger than that of $e_{ij}$ and
    $e_{ji}$. Hence, $s_{ij}(x')=s_{ij}(x)$ for all $ij \in S(x)$.
  \item $x'$ satisfies \eqref{stab:1} and \eqref{stab:2} if $x$ does.
  \end{enumerate}
\end{lemma}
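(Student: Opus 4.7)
The plan is to verify (i)--(iv) in order. Parts (i) and (iv) reduce to direct calculations, (iii) is the technical heart, and (ii) follows from (iii) together with parallel upper bounds for the remaining pairs.

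For (i), every term $1 - x'_{i_p} - x'_k$ in the maximum defining $s_{i_p j_p}(x')$ is uniformly shifted by $-\mu$ (since $x'_{i_p} = x_{i_p} + \mu$ and $x'_k = x_k$ for $k \neq j_p$), yielding $s_{i_p j_p}(x') = s(x) - \mu$; symmetrically $s_{j_p i_p}(x') = s_{j_p i_p}(x) + \mu = s(x) - \mu$. For (iv), the transfer preserves $\1^T x$, so \eqref{stab:1} is immediate. For \eqref{stab:2}, the only problematic case is an edge $j_p k$ with $k \neq i_p$, whose minimum pre-move value is $x(e_{j_p i_p}) = 1 - s_{j_p i_p}(x)$; the requirement $1 - s_{j_p i_p}(x) \geq 1 + \mu$ rearranges to $s_{j_p i_p}(x) + s(x) \leq 0$, which holds because \eqref{stab:2} on $x$ forces every surplus to be $\leq 0$.

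The heart of the proof is (iii). Fix $ij \in S(x)$; since $s(x)$ is the largest \emph{violated} surplus and $s_{ij}(x) > s(x)$, the pair cannot be violated, so $s_{ij}(x) = s_{ji}(x)$. I would pick $e_{ij}$ with unchanged value via a case analysis on the location of $i$. If $i = i_p$ (forcing $j \neq j_p$), any maximizer $i_p k$ of $s_{i_p j}(x)$ with $k \neq j_p$ would also be eligible in $s_{i_p j_p}(x)$, forcing $s_{ij}(x) \leq s(x)$; hence $e_{ij} = i_p j_p$, whose value is preserved. The case $i = j_p$ is symmetric. If $i \notin \{i_p, j_p\}$, both $e_{ij} = i i_p$ and $e_{ij} = i j_p$ would admit $i$ as a valid candidate in $s_{i_p j_p}(x)$ or $s_{j_p i_p}(x)$ respectively, giving $s(x) \geq s_{ij}(x) > s(x)$, a contradiction. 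So $e_{ij}$ lies between vertices unaffected by the move. For the second assertion of (iii), any edge $e$ incident to $i$ or $j$ whose value decreases must be incident to $j_p$ (and not to $i_p$); a short computation using $s_{ij}(x) - s_{j_p i_p}(x) > 2\mu$ shows $x(e) - x(e_{ij}) > 2\mu$, so $x'(e) = x(e) - \mu > x(e_{ij})$, preventing $e$ from overtaking $e_{ij}$ in the maximum.

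For (ii), (iii) handles pairs in $S(x)$ (surpluses unchanged), and $(i_p, j_p) \in I(x)$ drops to $s(x) - \mu$ by (i). For any other pair $(u, v)$ whose surplus could gain $\mu$ under the move, the same rerouting trick used in (iii) bounds $s_{uv}(x) \leq s_{j_p i_p}(x) = s(x) - 2\mu$, so $s_{uv}(x') \leq s(x) - \mu$. Hence $s(x') \leq s(x)$, and when equality holds $(i_p, j_p)$ has left $I$ while no new pair can enter, giving $|I(x')| < |I(x)|$. The most delicate step is (iii), whose three subcases all hinge on rerouting a candidate maximizer into the definition of $s_{i_p j_p}(x)$ or $s_{j_p i_p}(x)$; this is the only place where both the labeling convention $s_{i_p j_p}(x) \geq s_{j_p i_p}(x)$ and the maximality of $s(x)$ over violated pairs are simultaneously essential.
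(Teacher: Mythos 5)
The proposal follows essentially the same approach as the paper: for both, the crux is that any edge whose value decreases must be incident to $j_p$ (and not $i_p$), hence has $x$-value at least $x(e_{j_pi_p}) = x(e_{i_pj_p})+2\mu$, so its new value stays above $x(e_{i_pj_p})+\mu$; your ``rerouting trick'' is a restatement of exactly this fact. Your phrasing of (iv) via ``all surpluses are nonpositive'' is an equivalent repackaging of the paper's $x(e_{i_pj_p})\ge 1$.

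One small imprecision in (ii): you assume that a pair $(u,v)\notin S(x)$ whose surplus rises does so by exactly $\mu$, and from this deduce $s_{uv}(x)\le s_{j_pi_p}(x)=s(x)-2\mu$. But the surplus can increase by an amount strictly between $0$ and $\mu$: this happens when the original maximizer $e_{uv}$ does not decrease while some other decreasing edge $e$ with $1-x(e)<s_{uv}(x)$ overtakes the maximum after the shift. In that case the intermediate inequality $s_{uv}(x)\le s(x)-2\mu$ need not hold. The correct fix (and what the paper does implicitly) is to bound the new surplus directly: the defining edge $e$ decreased, so $x(e)\ge x(e_{j_pi_p})=1-s(x)+2\mu$, giving $1-x'(e)\le s(x)-\mu$ regardless of what $s_{uv}(x)$ was. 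This yields the same conclusion $s_{uv}(x')\le s(x)-\mu$, and the rest of your argument for $s(x')\le s(x)$ and $|I(x')|<|I(x)|$ then goes through unchanged.
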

\begin{proof}
  Note that $x'(e)=x(e) + \mu$ for all $e \in \delta(i_p) \setminus
  \{i_pj_p\}$, and hence $s_{i_pj_p}(x')=s_{i_pj_p}(x) -
  \mu$. Similarly, all edges $e \neq i_pj_p$ incident to $j_p$ have
  $x'(e)=x(e)-\mu$, and hence $s_{j_pi_p}(x')=s_{j_pi_p}(x)+\mu$. This
  implies (i). 

  \parpic(5cm,3cm)[r]{\includegraphics[scale=.9]{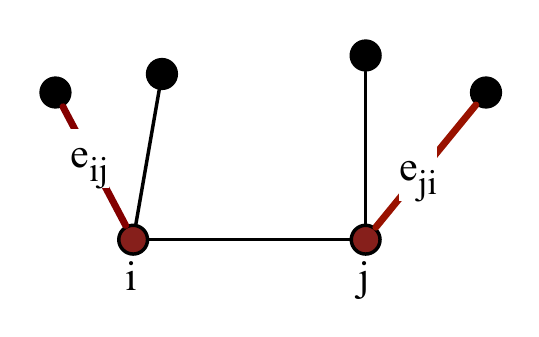}}

  Note that $x'(e)<x(e)$ only if $e$ is incident to
  $j_p$. Furthermore, for such edges $e$, we must have
  \begin{equation}\label{eq:jpedges}
    x(e_{i_pj_p}) +2\mu = x(e_{j_pi_p}) \leq x(e), 
  \end{equation}
  and hence $x'(e)\geq x(e_{i_pj_p}) + \mu$. Similarly, if
  $x'(e)>x(e)$ then $e$ must be incident to $i_p$, and thus
  $x'(e) \geq x(e_{i_pj_p})+\mu$. 

  This has several immediate consequences. For a pair $ij \in S(x)$,
  the $x$-values of $e_{ij}$ $e_{ji}$ remain the same, and they will
  remain surplus-defining edges for $s_{ij}(x')$ and $s_{ji}(x')$,
  respectively. In particular, we have $s_{ij}(x')=s_{ji}(x')$,
  proving (iii). 

  First consider some pair $ij \in I(x)$; by the above reasoning,
  $s_{ij}(x)$ can not increase, but may decrease. If it decreases,
  then it decreases by $\mu$. Similarly, $s_{ji}(x)$ may increase, but
  in this case, $s_{ji}(x') \leq s_{i_pj_p}(x)-\mu$. By (iii), no pair
  $ij \in S(x)$ changes its surpluses, and hence $s(x') \leq
  s(x)$. Furthermore, if $s(x')=s(x)$ then $I(x') \subsetneq I(x)$ as
  the pair $i_pj_p$ certainly decreases its surplus. This implies
  (ii). 

  Finally assume that $x$ satisfies \eqref{stab:1} and
  \eqref{stab:2}. It is clear that the local shift preserves
  \eqref{stab:1}.  Consider some edge $e \in E'$; since \eqref{stab:2}
  holds for $x$, we have $x(e) \geq 1$. Suppose that $x'(e) <
  x(e)$. Then $e$ must be incident to $j_p$, and clearly
  $x(e_{i_pj_p})+2\mu \leq x(e_{j_pi_p}) \leq x(e)$,
  for some $\mu > 0$. Note that $x(e_{i_pj_p}) \geq 1$ as $x$ satisfies
  \eqref{stab:2}, and hence $x(e) \geq 1+2\mu$. This proves (iv) as
  $x(e)$ decreases by no more than $\mu$. \qed
\end{proof}

From (ii) we obtain the following immediate corollary.

\begin{corollary}\label{cor:decr}
  $s(x)$ strictly decreases after at most $|E'|$ local shifts.
\end{corollary}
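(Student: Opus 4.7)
The plan is to obtain Corollary~\ref{cor:decr} as a direct monovariant consequence of Lemma~\ref{lem:maschler}(ii). That lemma already supplies the crucial dichotomy: every single local shift either strictly decreases $s(x)$, or leaves $s(x)$ unchanged while strictly reducing the integer quantity $|I(x)|$. Hence the only way $s(x)$ can stay constant across a run of shifts is for $|I(x)|$ to strictly decrease at each of them, and since $|I(x)|$ is a non-negative integer this can happen only finitely often.

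First I would bound $|I(x)| \leq |E'|$. Any pair $(i,j) \in I(x)$ satisfies $s_{ij}(x) = s(x)$, and by definition $s(x)$ is the largest surplus of a violated pair; so for every such violated pair, the reverse ordered direction has strictly smaller surplus and cannot also lie in $I(x)$. Therefore each edge of $E'$ contributes at most one ordered pair to $I(x)$, giving the desired linear bound.

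To finish, I would argue by contradiction: suppose $s(x)$ remained constant over a run of $|E'|$ consecutive local shifts. Then Lemma~\ref{lem:maschler}(ii) forces $|I(\cdot)|$ to strictly decrease at each shift, so starting from $|I(x)| \leq |E'|$ the quantity $|I(\cdot)|$ would drop below $1$ after the $|E'|$ shifts. But whenever the algorithm has not yet terminated there is still a violated pair realising the current value of $s$, so $|I(\cdot)| \geq 1$ throughout; this contradiction yields the claimed bound. I do not anticipate any substantive obstacle here, since all the real structural work has been absorbed into Lemma~\ref{lem:maschler}(ii) and what remains is just bookkeeping.
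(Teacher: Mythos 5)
Your argument is correct and is exactly the bookkeeping the paper leaves implicit when it calls the corollary ``immediate'' from Lemma~\ref{lem:maschler}(ii): the monovariant $(s(x),|I(x)|)$ decreases lexicographically at every shift, and since $|I(\cdot)|$ is a nonnegative integer bounded by $|E'|$ and must be at least $1$ whenever a shift is still to be performed, $s(\cdot)$ must strictly drop within $|E'|$ shifts. One small remark on your justification of $|I(x)| \le |E'|$: the reasoning ``the reverse ordered direction cannot also lie in $I(x)$'' does not, as written, cover non-violated pairs with $s_{ij}=s_{ji}=s(x)$. The bound is in fact immediate for a simpler reason the paper has already set up — the convention $s_{ij}\ge s_{ji}$ fixes one orientation per edge, so the ordered surplus list $s_{i_1j_1}\ge\dots\ge s_{i_mj_m}$ has exactly $m=|E'|$ entries and $I(x)$ is by definition a subset of those entries. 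With that substitution your monovariant argument closes cleanly.
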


One could now suspect that a sequence of local shifts would converge
to a prekernel element, but this is in fact not
known. Stearns~\cite{St68} showed, however, that if one picks a {\em
  maximal} violated pair $i_pj_p$ then the method can indeed be shown
to converge, but no polynomial bound on the number of shifts is known. 
Faigle et al.~\cite{FKK98} propose the following elegant fix. Consider
the following linear program with variables $y \in \IR^{E'}_+$ and
$\delta \in \IR$. We let $\Delta(x)$ be the smallest surplus of any
pair in $S(x)$, and let $\Delta(x)=0$ if $S(x)$ is empty.

\begin{align} 
  \max \quad & \delta \label{lp:delta}\tag{P($x$)} \\
    \st \quad & y(V) = \1^Tx \label{delta:1} \\
    & y(e_{ij}) = x(e_{ij}) && \forall ij\in S(x) \label{delta:2} \\
    & 1-y(e_{ij}) \geq 1-y(e) && \forall ij \in S(x), \forall e \in
    \delta(i)\setminus \{ij\} \label{delta:3}\\
    & 1-y(e) \leq \Delta(x) - \delta && \forall ij \not\in S(x),
    \forall e \in \delta(i)\setminus \{ij\} \label{delta:4} \\
    & y(e) \geq 1 && \forall e \in E' \label{delta:5} \\
    & y, \delta \geq 0 \notag
\end{align}
  
Let $x \in \IR^{E'}_+$ satisfy \eqref{stab:1} and \eqref{stab:2}. Then
$y:=x$ and $\delta=\Delta(x)-s(x)$ is a feasible solution to
\eqref{lp:delta}. Faigle et al. show a stronger statement.  Consider a
sequence
$$ x=x^1, x^2, \ldots, x^p $$
of edge vectors such that for all $2 \leq i \leq p$, $x^i$ arises from
$x^{i-1}$ through a local shift operation. Note that Lemma
\ref{lem:maschler} implies that $S(x^1) \subseteq S(x^2) \subseteq
\ldots \subseteq S(x^p)$. 

\begin{lemma}[\cite{FKK98}] \label{lem:faigle} 
  Suppose that $S(x^p)=S(x^1)$. For $1 \leq i \leq p$, define
  $y^i=x^i$ and let $\delta^i=\Delta(x^i)-s(x^i)$. Then
  $(y^i,\delta^i)$ is feasible for \eqref{lp:delta}, and $\delta^1
  \leq \delta^2 \leq \ldots \leq \delta^p$. 
\end{lemma}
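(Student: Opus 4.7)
The plan is to use the hypothesis $S(x^p) = S(x^1)$ to show that the pivot set and the surpluses on it are frozen throughout the whole sequence, and then verify LP feasibility constraint by constraint; monotonicity will drop out essentially for free.

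First I would argue $S(x^1) = S(x^2) = \cdots = S(x^p)$. For each step $k$, Lemma \ref{lem:maschler}(iii) preserves the surplus of every pair in $S(x^k)$, and Lemma \ref{lem:maschler}(ii) guarantees $s(x^{k+1}) \leq s(x^k)$; hence $S(x^k) \subseteq S(x^{k+1})$, and combining this inclusion chain with the hypothesis $S(x^p) = S(x^1)$ forces equality throughout. A straightforward induction on $k$, based on Lemma \ref{lem:maschler}(iii), then shows that for every $ij \in S(x^1)$ and every $k$ we have $x^k(e_{ij}) = x^1(e_{ij})$, $s_{ij}(x^k) = s_{ij}(x^1)$, and $e_{ij}$ still attains $\min\{x^k(e) : e \in \delta(i) \setminus \{ij\}\}$. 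Consequently $\Delta(x^k) = \Delta(x^1)$ is constant, so $\delta^k = \Delta(x^1) - s(x^k)$.

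Next I would check each constraint of \eqref{lp:delta} at $(y^k, \delta^k)$. Constraint \eqref{delta:1} follows because local shifts preserve $\1^Tx$; \eqref{delta:5} follows from Lemma \ref{lem:maschler}(iv) applied inductively, which says \eqref{stab:2} is maintained; \eqref{delta:2} is the frozen-value observation above; \eqref{delta:3} is the ``$e_{ij}$ still minimises'' observation. For \eqref{delta:4}, fix $ij \notin S(x^1) = S(x^k)$ and $e \in \delta(i) \setminus \{ij\}$; then $1 - y^k(e) \leq s_{ij}(x^k) \leq s(x^k) = \Delta(x^1) - \delta^k$, which is exactly what is required.

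Finally, the monotonicity $\delta^1 \leq \cdots \leq \delta^p$ is immediate from the formula $\delta^k = \Delta(x^1) - s(x^k)$ together with $s(x^{k+1}) \leq s(x^k)$ (Lemma \ref{lem:maschler}(ii)). The only slightly delicate piece is the induction establishing that $e_{ij}$ remains a minimiser for every $ij \in S(x^1)$, but this is precisely the content of Lemma \ref{lem:maschler}(iii), which forces any decreasing edge at $i$ or $j$ to stay strictly above $x^{k-1}(e_{ij}) = x^1(e_{ij})$. I do not foresee any deeper obstacle beyond careful bookkeeping.
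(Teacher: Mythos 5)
Your proof is correct and follows essentially the same route as the paper's: an inductive (step-by-step) argument that applies Lemma~\ref{lem:maschler}(iii) to preserve constraints \eqref{delta:2}--\eqref{delta:3} and the surpluses on $S$, (iv) for \eqref{delta:5}, the trivial shift-invariance for \eqref{delta:1}, and the bound $1-y(e)\le s_{ij}\le s(x^k)$ for \eqref{delta:4}, with monotonicity of $\delta^k$ coming from $\Delta$ being constant and $s$ non-increasing by (ii). You make the freeze of $S(x^k)$, $\Delta(x^k)$, and the minimizing edges $e_{ij}$ a bit more explicit than the paper does, but the substance is identical.
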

\begin{proof}
  We prove the lemma by induction of $i$. For $i=1$, we have already
  shown that $(y^1,\delta^1)$ is feasible for \eqref{lp:delta}.  Now
  assume that the induction hypothesis is true for $i \geq 1$. We show
  that $(y^{i+1},\delta^{i+1})$ is feasible.  It is easy to see that
  the local shift preserves constraints \eqref{delta:1} and
  \eqref{delta:2}. Lemma \ref{lem:maschler}.(iii) implies that
  \eqref{delta:3} continues to hold. The left-hand side of
  \eqref{delta:4} is at most $s(x^{i+1})$, and hence at most 
  $\Delta(x^{i+1})-\delta^{i+1}$. 
  Lemma \ref{lem:maschler}.(ii) together with the fact that
  $\Delta(x^i)=\Delta(x^{i+1})$ implies that $\delta^{i} \leq
  \delta^{i+1}$.  Finally, the core condition \eqref{delta:5}
  holds because of Lemma \ref{lem:maschler}.(iv). \qed
\end{proof}

The idea of Faigle et al. is now as follows: let $x$ be an edge-vector
that satisfies \eqref{stab:1} and \eqref{stab:2}. Solve
\eqref{lp:delta}, and let $(y,\delta)$ be the optimal
solution. We have that $s(y) \leq s(x)$, 
$$ s_{ij}(y)=s_{ji}(y), $$
for all $ij \in S(x)$, and therefore $S(x) \subseteq S(y)$. If this
inclusion is strict, then we have made progress by solving the
LP. Assume that $S(x)=S(y)$. In this case, we observe that the maximum
value of $1-y(e)$ over all left-hand sides of \eqref{delta:4} is
exactly $s(y)$, and we therefore have
$$ \delta=\Delta(x)- s(y). $$
Apply local moves to $y$ until $s(y)$ decreases. We know
from Corollary \ref{cor:decr} that this takes at most $|E'|$
steps. Let $y'$ be the resulting vector. Clearly, $S(y') \subseteq
S(y)$, and we claim that this inclusion must be strict. In fact, if
not then Lemma \ref{lem:faigle} implies that $(y',\delta')$ is
feasible for \eqref{lp:delta} for 
$$ \delta' = \Delta(x)-s(y') > \Delta(x)-s(y) = \delta, $$ 
where the inequality follows from that fact that $s(y') < s(y)$.  This
contradicts the optimality of $(y,\delta)$ for \eqref{lp:delta}.

\begin{theorem}[\cite{FKK98}] Given a point $x \in \IR^{E'}_+$ that
  satisfies \eqref{stab:1} and \eqref{stab:2} we can compute a point
  in the prekernel using $|E'|^2$ local shifts, and solving $|E'|$ LPs
  of the type \eqref{lp:delta}. 
\end{theorem}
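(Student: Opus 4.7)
The plan is to organize the algorithm as an outer loop that solves \eqref{lp:delta} and an inner loop that performs Maschler local shifts, and to exhibit the set $S(x)$ as a monovariant that strictly enlarges once per outer iteration, thereby bounding both the number of LP solves and the total number of shifts. Starting from the given $x$, each outer iteration solves LP$(x)$ and replaces $x$ by the optimal $y$; by Lemma \ref{lem:faigle} we have $s_{ij}(y)=s_{ji}(y)$ for every $ij \in S(x)$, so $S(y) \supseteq S(x)$, and by feasibility of the LP the conditions \eqref{stab:1} and \eqref{stab:2} continue to hold.

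The critical step is to certify that $S$ strictly grows after each outer iteration. If already $S(y) \supsetneq S(x)$, the iteration is done. Otherwise $S(y) = S(x)$, and because the largest left-hand side of \eqref{delta:4} for $y$ equals $s(y)$, we deduce $\delta = \Delta(x) - s(y)$; I then apply local shifts to $y$ until the maximum violated surplus $s(\cdot)$ strictly drops, producing some $y'$. By Corollary \ref{cor:decr} this requires at most $|E'|$ shifts, and Lemma \ref{lem:maschler}(iv) preserves \eqref{stab:1} and \eqref{stab:2} throughout. If the sequence of shifts had left $S$ unchanged, Lemma \ref{lem:faigle} would yield the pair $(y', \Delta(x) - s(y'))$ as feasible for \eqref{lp:delta} with strictly larger objective $\Delta(x) - s(y') > \Delta(x) - s(y) = \delta$, contradicting the optimality of $(y,\delta)$. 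Hence $S(y') \supsetneq S(y) = S(x)$, as required.

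Since $S(\cdot) \subseteq E'$ and strictly enlarges at each outer iteration, there can be at most $|E'|$ outer iterations, which gives at most $|E'|$ LP solves and at most $|E'| \cdot |E'| = |E'|^2$ local shifts in total. Upon termination, no violated pair exists, i.e.\ $s_{ij}(x) = s_{ji}(x)$ for all $ij \in E'$, which is precisely the prekernel reformulation \eqref{stab:4}, so the output is a point in the intersection of core and prekernel. The principal obstacle is the argument in the second case: one must trace Lemma \ref{lem:faigle} along the trajectory of intermediate shifted vectors to certify that $(y', \Delta(x) - s(y'))$ remains feasible for the \emph{same} LP parametrized by $x$, noting in particular that $\Delta(x)$ is fixed as long as $S$ does not change, so that the optimality of $(y,\delta)$ truly forces the monovariant $S$ to grow.
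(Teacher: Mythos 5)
Your proposal is correct and follows essentially the same two-loop scheme as the paper: solve \eqref{lp:delta} to equalize surpluses on $S(x)$, then either $S$ already strictly grows, or one performs at most $|E'|$ local shifts to make $s(\cdot)$ strictly drop and uses Lemma~\ref{lem:faigle} plus LP-optimality to force $S$ to grow, giving $|E'|$ outer iterations, $|E'|$ LP solves, and $|E'|^2$ shifts. One small attribution slip: the fact that $s_{ij}(y)=s_{ji}(y)$ for all $ij\in S(x)$ at the LP optimum $y$ is a direct consequence of constraints \eqref{delta:2}--\eqref{delta:3} (together with $s(y)\le s(x)$ from optimality), not of Lemma~\ref{lem:faigle}, which concerns sequences generated by local shifts; this does not affect the correctness of the argument.
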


\section{Discussion}

%%%%%%%%%%%%%%%%%%

In this paper we studied network bargaining games, and took a particular interest in unstable games. We showed that when the underlying network, $G=(V,E)$,  is sparse, we are able to identify an approximation of the smallest  blocking set,  $B$,  and we explained how it is possible to efficiently find a balanced (and thus stable) outcome in the network game induced by the graph, $G'=(V,E\setminus B)$.
There are several interesting directions in which this work could be taken. For example, one could allow for more complex utility functions~\cite{CK08} or allow agents to make more complex deals.

We make the observation that in games defined by $G$ and $G'$ we are
always working with maximum matchings, $M$ and $M'$ on the respective
networks.  Note that the matching $M'$ is no larger, and may in fact
be smaller than matching $M$. Therefore, we achieve balancedness by
{\em subsidizing} the game to an extent of $|M|-|M'|$. This difference
is clearly at most $|B|$, and hence it makes sense to use as small a
blocking set as we can find.
%This subsidy perspective brings to mind the literature on the \emph{cost of stability}~\cite{BE+09}.
Bachrach et al looked at stabilizing coalitions (in non-network
settings) by using external payments, and introduced the \emph{cost of
  stability} as the minimal external payment needed to stabilize a
game~\cite{BE+09}.
 % Both our and their approaches are trying to delve into the
 % underlying causes of instability, and provide ideas as to how to
 % talk about how far a game is from being stable.
 An interesting future direction might be to explore relationships and possible tradeoffs between the cost of stability and the size of the blocking set.
}{}

\bibliographystyle{splncs03}
\bibliography{nwbg}

\newpage

\iffalse
\section{Appendix A}
 
\subsection*{Proof of Lemma~\ref{lem:maschler}}
\fi

\end{document}